\documentclass[12pt]{article}
\usepackage{amsmath}
\usepackage{amssymb}
\usepackage{amsthm}
\usepackage{graphicx}
\usepackage{subfigure}
\usepackage{changepage}
\usepackage{hyperref}
\usepackage{cite}
\usepackage{url}
\usepackage{breakurl}
\usepackage[small]{caption}
\usepackage{mathtools,xparse}
\setlength{\textwidth}{17.6cm}
\setlength{\textheight}{21.9cm}
\setlength{\oddsidemargin}{-0.5cm}
\setlength{\evensidemargin}{0cm}
\setlength{\headheight}{0cm}
\setlength{\headsep}{0cm}
\setlength{\topmargin}{0cm}
\setlength{\footskip}{1.5cm}

\theoremstyle{definition}
\newtheorem{theorem}{Theorem}
\newtheorem{corollary}{Corollary}

\begin{document}
\baselineskip 0.6cm

\def\bra#1{\langle #1 |}
\def\ket#1{| #1 \rangle}
\def\inner#1#2{\langle #1 | #2 \rangle}
\def\app#1#2{%
  \mathrel{%
    \setbox0=\hbox{$#1\sim$}%
    \setbox2=\hbox{%
      \rlap{\hbox{$#1\propto$}}%
      \lower1.1\ht0\box0%
    }%
    \raise0.25\ht2\box2%
  }%
}
\def\approxprop{\mathpalette\app\relax}
\DeclarePairedDelimiter{\norm}{\lVert}{\rVert}

\begin{titlepage}

\begin{flushright}
\end{flushright}

\vskip 1.2cm

\begin{center}
{\Large \bf Spacetime from Unentanglement}

\vskip 0.7cm

{\large Yasunori Nomura,$^{a,b,c}$
  Pratik Rath,$^{a,b}$ and
  Nico Salzetta$^{a,b}$}

\vskip 0.5cm

$^a$ {\it Berkeley Center for Theoretical Physics, Department of Physics,\\
  University of California, Berkeley, CA 94720, USA}

\vskip 0.2cm

$^b$ {\it Theoretical Physics Group, Lawrence Berkeley National Laboratory, 
 CA 94720, USA}

\vskip 0.2cm

$^c$ {\it Kavli Institute for the Physics and Mathematics of the Universe 
 (WPI), University of Tokyo, Kashiwa, Chiba 277-8583, Japan}

\vskip 0.8cm

\abstract{The past decade has seen a tremendous effort toward unraveling 
the relationship between entanglement and emergent spacetime.  These 
investigations have revealed that entanglement between holographic 
degrees of freedom is crucial for the existence of bulk spacetime. 
We examine this connection from the other end of the entanglement 
spectrum and clarify the assertion that maximally entangled states 
have no reconstructable spacetime.  To do so, we first define the 
conditions for bulk reconstructability.  Under these terms, we scrutinize 
two cases of maximally entangled holographic states.  One is the 
familiar example of AdS black holes; these are dual to thermal states 
of the boundary CFT.  Sending the temperature to the cutoff scale makes 
the state maximally entangled and the respective black hole consumes 
the spacetime.  We then examine the de~Sitter limit of FRW spacetimes. 
This limit is maximally entangled if one formulates the boundary theory 
on the holographic screen.  Paralleling the AdS black hole, we find 
the resulting reconstructable region of spacetime vanishes.  Motivated 
by these results, we prove a theorem showing that maximally entangled 
states have no reconstructable spacetime.  Evidently, the emergence 
of spacetime is endemic to intermediate entanglement.  By studying 
the manner in which intermediate entanglement is achieved, we uncover 
important properties about the boundary theory of FRW spacetimes. 
With this clarified understanding, our final discussion elucidates 
the natural way in which holographic Hilbert spaces may house states 
dual to different geometries.  This paper provides a coherent picture 
clarifying the link between spacetime and entanglement and develops 
many promising avenues of further work.}

\end{center}
\end{titlepage}

\tableofcontents
\newpage

\section{Introduction}
\label{sec:intro}

It is believed that dynamical spacetime described by general relativity 
is an emergent phenomenon in the fundamental theory of quantum gravity. 
Despite this pervasive idea, the materialization of spacetime itself 
is not fully understood.  Holography posits that a fundamental 
description of quantum gravity resides in a non-gravitational 
spacetime whose dimension is less than that of the corresponding 
bulk spacetime~\cite{'tHooft:1993gx,Susskind:1994vu,Bousso:2002ju}. 
In this paper, we study the emergence of gravitational spacetime 
in the context of holography, using the renowned anti-de~Sitter 
(AdS)/conformal field theory (CFT) correspondence~\cite{Maldacena:1997re} 
and a putative holographic theory of Friedmann-Robertson-Walker (FRW) 
spacetimes~\cite{Nomura:2016ikr}.

In this paper, we expound on the intimate relationship between the 
emergence of spacetime and the lack of maximal entanglement in the 
boundary state.  Through this, we see that the existence of spacetime 
is necessarily non-generic and that nature seizes the opportunity to 
construct local spacetime when states deviate from maximal entanglement. 
A reason why this viewpoint is not heavily emphasized (see, however, 
e.g.\ Refs.~\cite{Nomura:2017npr,Bao:2017gza}) in the standard context 
of AdS/CFT is that one almost always considers states with energy much 
lower than the cutoff (often sent to infinity). The restriction to 
these ``low energy'' states implicitly narrows our perspective to those 
automatically having non-maximal entropy.  However, in a holographic 
theory with a finite cutoff scale (or a fundamentally nonlocal theory), 
the regime of maximal entropy is much more readily accessible.  This 
happens to be the case in FRW holography, and perhaps holography in 
general.  Through this lens, we analyze the emergence of spacetime 
both in the familiar setting of Schwarzschild-AdS spacetime with 
an infrared cutoff and in flat FRW universes.  We explicitly see that 
the directly reconstructable region of spacetime~\cite{Nomura:2017npr,%
Sanches:2017xhn} emerges only as we deviate from maximally entangled 
states.  This implies that a holographic theory of exact de~Sitter 
space cannot be obtained as a natural limit of theories dual to FRW 
spacetimes by sending the fluid equation of state parameter, $w$, to 
$-1$.  In addition to analyzing these two examples, we prove a theorem 
demonstrating the lack of directly reconstructable spacetime in the 
case that a boundary state is maximally entangled.

After surveying the relationship between spacetime and (the lack of) 
entanglement, we then analyze the deviation from maximal entropy itself. 
The size of the subregions for which deviations occur reveals valuable 
information about the underlying holographic theory, and observing the 
corresponding emergence of spacetime in the bulk provides a glimpse into 
the mechanism by which nature creates bulk local degrees of freedom. 
In the case of Schwarzschild-AdS, reconstructable spacetime (the region 
between the horizon and the cutoff) appears as the temperature in the 
local boundary theory (the CFT) is lowered, and the resulting entanglement 
entropy structure (calculated holographically) is consistent with a 
local theory at high temperature.  However, this entanglement structure 
is not observed in the case of FRW spacetimes as we adjust $w$ away 
from $-1$; additionally, the reconstructable region grows from the 
deepest points in the bulk outward.  This suggests that the manner 
in which entanglement is scaffolded is unlike that of AdS/CFT.  In 
fact, this aberrant behavior leads us to believe that the holographic 
theory dual to FRW spacetimes has nonlocal interactions.

The relationship between spacetime and 
quantum entanglement between holographic degrees of freedom is no 
secret~\cite{Ryu:2006bv,Hubeny:2007xt,Swingle:2009bg,VanRaamsdonk:2009ar,%
Faulkner:2013ana,Engelhardt:2014gca,Sanches:2016sxy}, but what {\it is} 
spacetime?  Undoubtedly, entanglement is a necessity for the existence 
of spacetime.  But, it is indeed possible to have too much of a good 
thing.  The analysis here exposes the inability to construct spacetime 
from maximally entangled boundary states.  Since typical states in 
a Hilbert space are maximally entangled~\cite{Page:1993df}, this implies 
that states with bulk dual are not typical.  We see that spacetime is 
an emergent property of non-generic states in the Hilbert space with 
both non-vanishing and non-maximal entanglement for subregions.  The 
existence of entanglement allows for the construction of a code subspace 
of states~\cite{Almheiri:2014lwa} in which local, semi-classical bulk 
degrees of freedom can be encoded redundantly.  Simultaneously, the 
lack of maximal entanglement allows for a code subspace with subsystem 
recovery---hence partitioning the bulk into a collection of local Hilbert 
spaces.  With this perspective, we see that holographic theories are 
exceedingly enterprising---once deviating from maximal entanglement, 
nature immediately seizes the opportunity to construct spacetime.  In 
this sense, spacetime is the byproduct of nature's efficient use of 
intermediate entanglement to construct codes with subsystem recovery.

For a given spacetime with a holographic boundary, one can calculate the 
von~Neumann entropies for all possible subregions of the boundary via 
the Hubeny-Rangamani-Ryu-Takayanagi (HRRT) prescription~\cite{Ryu:2006bv,%
Hubeny:2007xt,Sanches:2016sxy}.  The corresponding entanglement structure 
heavily constrains the possible boundary states, but by no means uniquely 
specifies it.  In fact, given an entanglement structure and a tensor 
product Hilbert space, one can always find a basis for the Hilbert 
space in which all basis states have the desired entanglement structure. 
If one considers each of these basis states to be dual to the spacetime 
reproducing the entanglement, then by superpositions one could entirely 
change the entanglement structure, and hence the spacetime.  This 
property naturally raises the question of how the boundary Hilbert 
space can accommodate states dual to different semiclassical geometries. 
Fortunately, for generic dynamical systems, the Hilbert space can be 
binned into energy bands, and canonical typicality provides us with 
the result that generic states {\it within these bands} have the same 
entanglement structure, regardless of the energy band's size.  This 
allows the holographic Hilbert space to contain states dual to many 
different spacetimes, each of which can have bulk excitations encoded 
state independently.  Importantly, this is contingent on the result 
that typical states have no spacetime.

\subsubsection*{Outline}

Section~\ref{sec:typical} walks through the statement that maximally 
entangled (and hence typical) states have no reconstructable spacetime. 
This is broken down into parts.  First, we must define what we mean by 
reconstructable; this is detailed in Section~\ref{subsec:reconst}, and 
is very important toward understanding the framework of the rest of 
the paper. We then use this construction in Section~\ref{subsec:AdS-BH} 
to investigate the reconstructable region of AdS with a black hole. 
We see the expected behavior that the reconstructable region of spacetime 
vanishes as the temperature of the black hole reaches the cutoff scale, 
making the state typical.  In Section~\ref{subsec:dS_max-ent}, we 
show that de~Sitter states are maximally entangled by finding their 
HRRT surfaces.  In Section~\ref{subsec:dS_no-spacetime}, we combine 
numerical results for flat FRW universes and use the additional property 
that de~Sitter's HRRT surfaces lie on a null cone to show that the 
reconstructable region vanishes in the de~Sitter limit of FRW spacetimes. 
Motivated by these results, in Section~\ref{subsec:proof} we prove 
a theorem showing that if a state is maximally entangled, then its HRRT 
surfaces either wrap the holographic space or live on the null cone. 
This is then used to present the general argument that maximally 
entangled states have no spacetime.

Section~\ref{sec:atypical} compares the emergence of spacetime in 
the two theories we are considering. Sections~\ref{subsec:lower-T} 
and \ref{subsec:FRW-Q} present results comparing the entanglement 
structure of AdS black holes and FRW spacetimes, respectively. 
Section~\ref{subsec:locality} interprets these results and argues 
that the appropriate holographic dual of FRW spacetimes is most 
likely nonlocal.

In Section~\ref{sec:linearity}, we put together all of the previous 
results and explain how one Hilbert space can contain states dual to 
many different semiclassical spacetimes.  Here we discuss the lack of 
a need for state dependence when describing the directly reconstructable 
region.

In Appendix~\ref{app:two-sided}, we analyze two-sided black holes 
within our construction and discuss how a version of complementarity 
works in this setup.  Appendices~\ref{app:S-AdS} and \ref{app:dS-FRW} 
collect explicit calculations for Schwarzschild-AdS and the de Sitter 
limit of FRW spacetimes, respectively.

\section{Maximally Entropic States Have No Spacetime}
\label{sec:typical}

In this section, we see that maximally entangled states in holographic 
theories do not have directly reconstructable spacetime.  First we 
lay out the conditions for reconstructability in general theories of 
holographic spacetimes.  Then we examine the familiar example of a 
large static black hole in AdS and determine its reconstructable region. 
We then discuss the de~Sitter limit of flat FRW spacetimes.  Finally, 
we prove a theorem establishing that maximally entropic holographic 
states have no reconstructable spacetime.

\subsection{Holographic reconstructability}
\label{subsec:reconst}

In order to argue that typical states have no reconstructable region, 
we must first present the conditions for a region of spacetime to be 
reconstructed from the boundary theory.  We adopt the formalism presented 
first in Ref.~\cite{Sanches:2017xhn} but appropriately generalized 
in Ref.~\cite{Nomura:2017npr} to theories living on holographic 
screens~\cite{Bousso:1999cb} (which naturally includes the boundary 
of AdS as in the AdS/CFT correspondence).

The question to answer is:\ ``given a boundary state and its time 
evolution with a known gravitational bulk dual, what regions of 
the bulk can be reconstructed?''  This may sound tautological, but 
it is not.  Settings in which this question is nontrivial include 
spacetimes with black holes and other singularities.  From entanglement 
wedge reconstruction~\cite{Jafferis:2015del,Dong:2016eik}, we know 
that the information of a pure black hole is contained in the boundary 
theory but whether or not the interior is reconstructable is unknown. 
In holographic theories of general spacetimes, we are interested in 
describing spacetimes with big bang singularities and a natural question 
is whether or not the theory reconstructs spacetime arbitrarily close 
to the initial singularity.

To answer this question, Ref.~\cite{Sanches:2017xhn} proposed that 
reconstructable points in a spacetime are precisely those that can be 
localized at the intersection of entanglement wedges.  This is similar 
to the proposal in Ref.~\cite{Kabat:2017mun} which advocates that 
reconstructable points are those located at the intersection of 
HRRT surfaces anchored to arbitrary achronal subregions of the AdS 
conformal boundary.  However, this construction lacks the ability 
to localize points in entanglement shadows, which can form in rather 
tame spacetimes (e.g.\ a neutron star in AdS), while using the 
intersection of entanglement wedges allows us to probe these regions.

In order to generalize this to theories living on holographic screens, 
an essential change is that one can only consider HRRT surfaces anchored 
to the leaves of a given holographic screen (usually associated to 
a fixed reference frame)~\cite{Nomura:2017npr}.  This is because 
holographic screens have a unique foliation into leaves that corresponds 
to a particular time foliation of the holographic theory.  Thus the 
von~Neumann entropy of subregions in the holographic theory only makes 
sense for subregions of a single leaf.  Note that despite the lack of 
a unique time foliation of the conformal boundary, this subtlety is 
also present in AdS/CFT.  Namely, one should consider only a single 
time foliation of the boundary and the HRRT surfaces anchored to 
the associated equal time slices even in asymptotically AdS 
spacetimes~\cite{Nomura:2017npr}.%
\footnote{This is related to the work in Ref.~\cite{Kusuki:2017jxh}, 
 which studied the breakdown of the HRRT formula in certain limits 
 of boundary subregions.  These breakdowns correspond to disallowed 
 foliations of the boundary theory.}
This issue becomes manifest when the boundary contains multiple 
disconnected components, as we discuss in Appendix~\ref{app:two-sided}.

Thus we define the reconstructable region of a spacetime as the 
union of all points that can be localized at the boundary of 
entanglement wedges of all subregions of leaves of the holographic 
screen.  Henceforth, we will refer to the regions of spacetime constructed 
in this way as the directly reconstructable regions (or simply the 
reconstructable regions when the context is clear), and our analysis 
will primarily focus on these regions.  For a more detailed study 
of directly reconstructable regions in general spacetimes, see 
Ref.~\cite{Nomura:2017npr}.  In particular, this definition only 
allows for the reconstruction of points outside the horizon for 
a quasi-static one-sided black hole, since such a horizon acts as 
an extremal surface barrier~\cite{Engelhardt:2013tra}.%
\footnote{This does not exclude the possibility that the holographic 
 theory allows for some effective description of regions other than 
 the directly reconstructable one, e.g. the black hole interior 
 (perhaps along the lines of Ref.~\cite{Papadodimas:2015jra}).  This 
 may make the interior spacetime manifest, perhaps at the cost of 
 losing the local description elsewhere, and may be necessary to 
 describe the fate of a physical object falling into a black hole. 
 We focus on spacetime regions that can be described by the boundary 
 theory without resorting to such descriptions.}
This also prevents the direct reconstruction of points near singularities 
such as big bang singularities and the black hole singularity of a 
two-sided black hole.

Now that we have detailed the conditions for regions of spacetime 
to be directly reconstructable, we must determine a measure of ``how 
much'' spacetime is reconstructable.  This will allow us to see the 
loss of spacetime in the limit of states becoming typical.  In the 
context of quantum error correction~\cite{Almheiri:2014lwa}, we are 
attempting to quantify the factorization of the code subspace, e.g.\ 
how many dangling bulk legs exist in a tensor network representation 
of the code~\cite{Pastawski:2015qua,Hayden:2016cfa}.  We expect 
the spacetime volume of the reconstructable region to be indicative 
to this property, and we will use it in our subsequent analyses. 
The bulk spacetime directly reconstructable from a single leaf 
depends on features of the bulk, for example, the existence of 
shadows and time dependence.  In the case of $(d+1)$-dimensional 
flat FRW spacetimes, we find that a codimension-0 region can be 
reconstructed from a single leaf.  On the other hand, in any static 
spacetime, all HRRT surfaces anchored to one leaf live in the same 
time slice in the bulk, and hence their intersections reconstruct 
a codimension-1 surface of the bulk.  This is the case in an AdS 
black hole.

The discrepancy of the dimensions of the directly reconstructable 
regions for different spacetimes of interest may seem to cause issues 
when trying to compare the loss of spacetime in these systems.  Namely, 
it seems difficult to compare the loss of reconstructable spacetime 
in Schwarzschild-AdS as we increase the black hole mass to the loss 
of spacetime in the $w \rightarrow -1$ limit of flat FRW spacetimes. 
However, in all cases, the spacetime region directly reconstructable 
from a small time interval in the boundary theory is codimension-0. 
We can then examine the relative loss of spacetime in both cases (black 
hole horizon approaching the boundary in AdS space and $w \rightarrow -1$ 
in FRW spacetimes) by taking the ratio of the volume of the reconstructable 
region to the reconstructable volume of some reference state (e.g.\ 
pure AdS and flat FRW with some fixed $w \neq -1$).  In static 
spacetimes, this will reduce to a ratio of the spatial volumes 
reconstructed on a codimension-1 slice, allowing us to consider 
only the volume of regions reconstructed from single leaves.

\subsection{Large AdS black holes}
\label{subsec:AdS-BH}

Here we will see how spacetime disappears as we increase the mass 
of the black hole in static Schwarzschild-AdS spacetime, making the 
corresponding holographic state maximally entangled.  We consider 
a holographic pure state living on the (single) conformal boundary 
of AdS.  We introduce an infrared cutoff $r \leq R$ in AdS space and 
consider a $d+1$ dimensional large black hole with horizon radius 
$r = r_+$.

As discussed in Section~\ref{subsec:reconst}, the size of the spacetime 
region directly reconstructable from the boundary theory is characterized 
by $V(r_+,R)$, the spatial volume between the black hole horizon and 
the cutoff.  We normalize it by the volume of the region $r \leq R$ in 
empty AdS space, $V(R)$, to get the ratio
\begin{equation}
  f\Bigl(\frac{r_+}{R}\Bigr) \equiv \frac{V(r_+,R)}{V(R)} 
  = (d-1) \frac{r_+^{d-1}}{R^{d-1}}\! 
    \int_1^{\frac{R}{r_+}}\! \frac{x^{d-2}}{\sqrt{1-\frac{1}{x^d}}}\, dx,
\label{eq:vol-ratio}
\end{equation}
which depends only on $r_+/R$ (and $d$).  As expected, it behaves as
\begin{equation}
  f\Bigl(\frac{r_+}{R}\Bigr) \,\left\{\! \begin{array}{ll}
      \simeq 1      & (r_+ \ll R) \\
      \rightarrow 0 & (r_+ \rightarrow R),
    \end{array} \right.
\label{eq:f-limits}
\end{equation}
in the two opposite limits.  The details of this calculation can be 
found in Appendix~\ref{subapp:S-AdS_volume}.  Here, we plot $f(r_+/R)$ 
in Fig.~\ref{fig:f} for various values of $d$.
\begin{figure}[t]
\begin{center}
  \includegraphics[height=6.5cm]{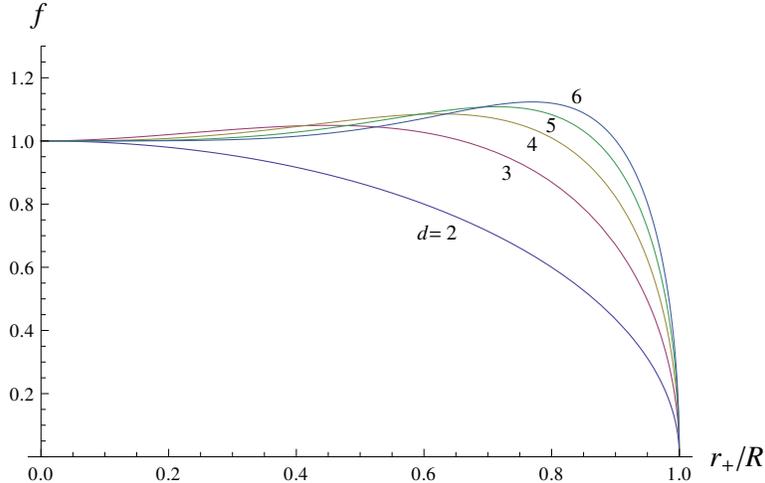}
\end{center}
\caption{The volume $V(r_+,R)$ of the Schwarzschild-AdS spacetime 
 that can be reconstructed from the boundary theory, normalized by the 
 corresponding volume $V(R)$ in empty AdS space:\ $f = V(r_+,R)/V(R)$. 
 Here, $R$ is the infrared cutoff of $(d+1)$-dimensional AdS space, 
 and $r_+$ is the horizon radius of the black hole.}
\label{fig:f}
\end{figure}

In the limit $r_+ \rightarrow R$, the HRRT surface, $\gamma_A$, anchored 
to the boundary of subregion $A$ of a boundary space (a constant $t$ 
slice of the $r = R$ hypersurface) becomes the region $A$ itself or the 
complement, $\bar{A}$, of $A$ on the boundary space, whichever has the 
smaller volume.%
\footnote{We do not impose a homology constraint, since we consider 
 a pure state in the holographic theory.  Additionally, we only consider 
 subregions larger than the cutoff size.}
This implies that the entanglement entropy of $A$, given by the area 
of the HRRT surface as $S_A = \norm{\gamma_A}/4 l_{\rm P}^{d-1}$, becomes 
exactly proportional to the smaller of the volumes of $A$ and $\bar{A}$ 
in the boundary theory:
\begin{equation}
  S_A = \frac{1}{4 l_{\rm P}^{d-1}} 
    {\rm min}\{ \norm{A}, \norm{\bar{A}} \}.
\label{eq:max-ent}
\end{equation}
Here, $\norm{x}$ represents the volume of the object $x$ (often called 
the area for a codimension-2 surface in spacetime), and $l_{\rm P}$ 
is the $(d+1)$-dimensional Planck length in the bulk.  Via usual 
thermodynamic arguments, we interpret this to mean that the state 
in the boundary theory is generic, so that it obeys the Page 
law~\cite{Page:1993df}.%
\footnote{Page's analysis tells us that for a generic state (a Haar 
 random state) in a Hilbert space, the entanglement entropy of a reduced 
 state is nearly maximal.  In fact, at the level of the approximation 
 we employ in this paper, $\norm{A}/l_{\rm P}^{d-1} \rightarrow \infty$, 
 such a state has the maximal entanglement entropy for any subregion, 
 Eq.~(\ref{eq:max-ent}).}
This in turn implies that the temperature of the system, which is 
identified as the Hawking temperature $T_{\rm H}$, is at the cutoff 
scale.%
\footnote{When we refer to a high temperature state, we do not mean 
 that the whole holographic state is a mixed thermal state.  What we 
 really mean is a high energy state, since we focus on pure states.}
$T_{\rm H}$ is related to $r_+$ by
\begin{equation}
  \frac{r_+}{R} = \frac{4\pi l^2}{d R} T_{\rm H},
\label{eq:rR-temp}
\end{equation}
where $l$ is the AdS radius.  Hence, the cutoff scale of the boundary 
theory is given by~\cite{Susskind:1998dq}
\begin{equation}
  \Lambda = \frac{d R}{4\pi l^2}.
\label{eq:cutoff}
\end{equation}
This allows us to interpret the horizontal axis of Fig.~\ref{fig:f} as 
$T_{\rm H}/\Lambda$ from the viewpoint of the boundary theory.

We finally make a few comments.  First, it is important to note that by 
the infrared cutoff, we do not mean that the spacetime literally ends 
there as in the scenario of Ref.~\cite{Randall:1999vf}.  Such termination 
of spacetime would introduce dynamical gravity in the holographic theory, 
making the maximum entropy of a subregion scale as the area, rather than 
the volume, in the holographic theory.  Rather, our infrared cutoff here 
means that we focus only on the degrees of freedom in the bulk deeper 
than $r = R$, corresponding to setting the sliding renormalization 
scale to be $\approx R/l^2$ in the boundary theory.  In particular, 
the boundary theory is still non-gravitational.

Second, to state that spacetime disappears in the limit where the 
holographic state becomes typical, it is crucial to define spacetime 
as the directly reconstructable region.  This becomes clear by 
considering a large subregion $A$ on the boundary theory such that 
$A$ and its HRRT surface $\gamma_A$ enclose the black hole at the 
center.  If we take the simple viewpoint of entanglement wedge 
reconstruction, this would say that spacetime does not disappear 
even if the black hole becomes large and its horizon approaches 
the cutoff surface, since the black hole interior is within the 
entanglement wedge of $A$ so that it still exists in the sense of 
entanglement wedge reconstruction.  We, however, claim that such 
a region does not exist as a localizable spacetime region, as 
explained in Section~\ref{subsec:reconst}.

Third, the curves in Fig.~\ref{fig:f} are not monotonically decreasing 
as $r_+$ increases for $d > 2$, despite the fact that
\begin{equation}
  \frac{d}{dr_+} \bigl\{ S_{A,{\rm max}} - S_{A,{\rm BH}}(r_+) \bigr\} 
  < 0.
\label{eq:resid-S}
\end{equation}
Here, $S_{A,{\rm max}}$ and $S_{A,{\rm BH}}(r_+)$ are the maximal entropy 
and the entropy corresponding to the black hole geometry of subregion 
$A$, given by
\begin{equation}
  S_{A,{\rm max}} = \frac{\norm{A}}{4 l_{\rm P}^{d-1}},
\qquad
  S_{A,{\rm BH}}(r_+) 
  = \frac{\norm{A}}{4 l_{\rm P}^{d-1}} \frac{r_+^{d-1}}{R^{d-1}}.
\label{eq:S_max-BH}
\end{equation}
This increase in spacetime volume may be demonstrating that the 
additional entanglement in the boundary state allows for more bulk 
nodes in the code subspace.  Alternatively, this may be a feature 
of using volume as our measure.  Regardless, the decrease observed 
near the cutoff temperature is the main focus of our attention, and 
we expect any other reasonable measure to correspondingly vanish.

Finally, the statement that spacetime disappears as the holographic state 
approaches typicality persists for two-sided black holes.  In this setup, 
there is a new issue that does not exist in the case of single-sided 
black holes:\ the choice of a reference frame associated with a relative 
time shift between the two boundaries.  The discussion of two-sided 
black holes is given in Appendix~\ref{app:two-sided}.

\subsection{de~Sitter states are maximally entropic}
\label{subsec:dS_max-ent}

We have seen that a large black hole in AdS with $r_+ \rightarrow R$ 
corresponds to CFT states at the cutoff temperature, and that the 
holographic states in this limit have the entanglement entropy structure 
of Eq.~(\ref{eq:max-ent}).  Below, we refer to states exhibiting 
Eq.~(\ref{eq:max-ent}) as the {\it maximally entropic states}.  Is 
there an analogous situation in the holographic theory of FRW spacetimes, 
described in Ref.~\cite{Nomura:2016ikr}?  Here we argue that the de~Sitter 
limit ($w \rightarrow -1$) in flat FRW universes provides one.%
\footnote{For a simple proof applicable to $2+1$ dimensions, see 
 Appendix~\ref{subapp:2Dproof}.}

We first see that the holographic state becomes maximally entropic 
in the case that a universe approaches de~Sitter space at late 
times~\cite{Sanches:2016sxy}.  This situation arises when the universe 
contains multiple fluid components including one with $w = -1$, so 
that it is dominated by the $w = -1$ component at late times.  This 
analysis does not apply directly to the case of a single component with 
$w = -1 + \epsilon$ ($\epsilon \rightarrow 0^+$), which will be discussed 
later.

In the universe under consideration, the FRW metric approaches the 
de~Sitter metric in flat slicing at late times
\begin{equation}
  ds^2 = -dt^2 + e^{\frac{2t}{\alpha}} 
    \bigl( dr^2 + r^2 d\Omega_{d-1}^2 \bigr),
\label{eq:dS-flat}
\end{equation}
where $\alpha$ is the Hubble radius, and we have taken the spacetime 
dimension of the bulk to be $d+1$.  The Penrose diagram of this 
spacetime is depicted in Fig.~\ref{fig:dS-flat}, where constant time 
slices are drawn and the region covered by the coordinates is shaded; 
future timelike infinity $I_+$ corresponds to $t = \infty$, while the 
null hypersurface $N$ corresponds to $t = -\infty$.
\begin{figure}[t]
\begin{center}
  \includegraphics[height=6.5cm]{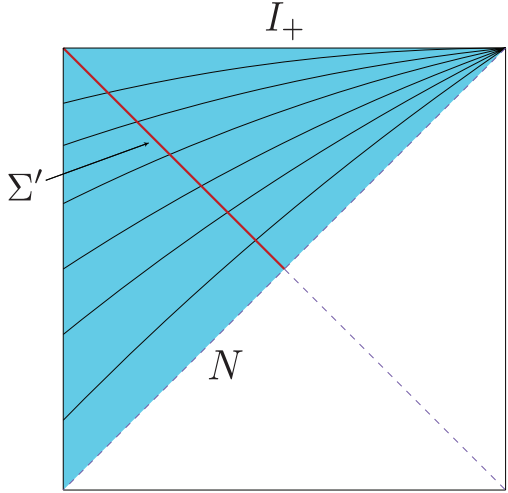}
\end{center}
\caption{The Penrose diagram of de~Sitter space.  The spacetime region 
 covered by the flat-slicing coordinates is shaded, and constant time 
 slices in this coordinate system are drawn.  The codimension-1 null 
 hypersurface $\Sigma'$ is the cosmological horizon for an observer 
 at $r=0$, to which the holographic screen of the FRW universe 
 asymptotes in the future.}
\label{fig:dS-flat}
\end{figure}
At late times, the past holographic screen of the FRW universe asymptotes 
to the codimension-1 null hypersurface $\Sigma'$ depicted in the figure. 
This hypersurface is located at
\begin{equation}
  r = \alpha\, e^{-\frac{t}{\alpha}},
\label{eq:r_dS}
\end{equation}
which corresponds to the cosmological horizon for an observer moving along 
the $r=0$ geodesic.

We can now transform the coordinates to static slicing
\begin{equation}
  ds^2 = -\biggl( 1 - \frac{\rho^2}{\alpha^2} \biggr) d\tau^2 
    + \frac{1}{1 - \frac{\rho^2}{\alpha^2}} d\rho^2 
    + \rho^2 d\Omega_{d-1}^2.
\label{eq:dS-static}
\end{equation}
In Fig.~\ref{fig:dS-static}, we depict constant $\tau$ (red) and constant 
$\rho$ (blue) slices, with the shaded region being covered by the 
coordinates.
\begin{figure}[t]
\begin{center}
  \includegraphics[height=6.5cm]{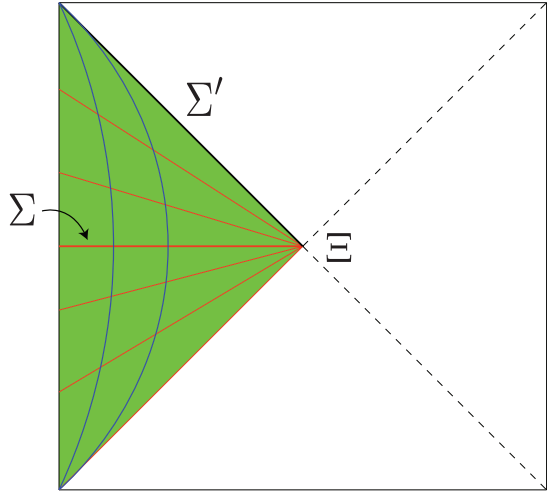}
\end{center}
\caption{Constant time slices and the spacetime region covered by 
 the coordinates in static slicing of de~Sitter space.  Here, $\Sigma$ 
 is the $\tau = 0$ hypersurface, and $\Xi$ is the bifurcation surface, 
 given by $\rho = \alpha$ with finite $\tau$.}
\label{fig:dS-static}
\end{figure}
This metric makes it manifest that the spacetime has a Killing symmetry 
corresponding to $\tau$ translation.  Using this symmetry, we can map 
a leaf of the original FRW universe to the $\tau = 0$ hypersurface, 
$\Sigma$.  Since the leaf of the universe under consideration approaches 
arbitrarily close to Eq.~(\ref{eq:r_dS}) at late times, the image of 
the map, $\Xi'$, asymptotes to the bifurcation surface $\Xi$ at
\begin{equation}
  \rho = \alpha,
\label{eq:rho_dS}
\end{equation}
for a leaf at later times.

Consider an arbitrary subregion $A$ on $\Xi'$ and the minimal area 
surface $\gamma_A$ on $\Sigma$ anchored to the boundary of $A$, 
$\partial A$.  Since the geometry of $\Sigma$ is $S^d$ with $\Xi$ 
being an equator, the minimal area surface $\gamma_A$ becomes the 
region $A$ itself (or its complement on $\Xi'$, whichever is smaller) 
in the limit $\Xi' \rightarrow \Xi$.  Strictly speaking, this statement 
does not apply for a small subset of subregions, since $\Xi'$ is 
not exactly $\Xi$ unless the leaf under consideration is at strictly 
infinite time.  (For subregions in this subset, the minimal area 
surfaces probe $\rho \ll \alpha$.  For spherical caps, these subregions 
are approximately hemispheres.)  However, the fractional size of the 
subset goes to zero as we focus on later leaves.  Continuity then 
tells us that our conclusion persists for all subregions.

The surface $\gamma_A$ found above is in fact an extremal surface, 
since the bifurcation surface $\Xi$ is an extremal surface, so any 
subregion of it is also extremal.  It is easy to show that this surface 
is indeed the HRRT surface, the minimal area extremal surface.  Suppose 
there is another extremal surface $\gamma'_A$ anchored to $\partial A$. 
We could then send a null congruence from $\gamma'_A$ down to $\Sigma$, 
yielding another codimension-2 surface $\gamma_A''$ given by the 
intersection of the null congruence and $\Sigma$.  Because $\gamma'_A$ 
is extremal, the focusing of the null rays implies $\norm{\gamma'_A} 
> \norm{\gamma''_A}$, and by construction $\norm{\gamma_A} < 
\norm{\gamma''_A}$.  This implies that $\gamma_A$ is the HRRT surface, 
and hence
\begin{equation}
  S_A = \frac{1}{4 l_{\rm P}^{d-1}} 
    {\rm min}\{ \norm{A}, \norm{\bar{A}} \}.
\label{eq:S_A-norm_A}
\end{equation}
Namely, the holographic state representing an FRW universe that 
asymptotically approaches de~Sitter space becomes a maximally entropic 
state in the late time limit.

The global spacetime structure in the case of a single fluid component 
with $w \neq -1$ is qualitatively different from the case discussed 
above.  For example, the area of a leaf grows indefinitely.  However, 
for any finite time interval, the behavior of the system approaches that 
of de~Sitter space in the limit $w \rightarrow -1$.  In fact, the numerical 
analysis of Ref.~\cite{Nomura:2016ikr} tells us that the holographic 
entanglement entropy of a spherical cap region becomes maximal in the 
$w \rightarrow -1$ limit.  We show in Appendix~\ref{subapp:FRWlimit} 
that this occurs for an arbitrary subregion on a leaf.

\subsection{Spacetime disappears as {\boldmath $w \rightarrow -1$} in the 
holographic FRW theory}
\label{subsec:dS_no-spacetime}

We have seen in our AdS/CFT example that as the holographic state 
approaches typicality, and hence becomes maximally entropic, the 
directly reconstructable region disappears.  On the other hand, we 
have shown that the entanglement entropies for flat FRW universes 
approaches the maximal form as $w \rightarrow -1$.  Does this limit 
have a corresponding disappearance of reconstructable spacetime? 
Here we will show that the answer to this question is yes.

From the analysis of Section~\ref{subsec:dS_max-ent}, we see that a 
leaf at late times in universes approaching de~Sitter space can be mapped 
to a surface on the $\tau = 0$ hypersurface $\Sigma$, which asymptotes 
to the bifurcation surface $\Xi$ in the late time limit.  From the 
Killing symmetry, the HRRT surfaces anchored to this mapped leaf must 
all be restricted to living on $\Sigma$.  Mapping the HRRT surfaces 
back to the original location, we see that they asymptote to living 
on the null hypersurface $\Sigma'$.  Thus, we find that the HRRT 
surface for any subregion of a leaf $\sigma_*$ asymptote to the 
future boundary of the causal region $D_{\sigma_*}$, which we denote 
by $\partial D_{\sigma_*}^{(+)}$, as a universe approaches de~Sitter 
space.  A similar argument holds for universes where $w \rightarrow -1$. 
In Appendix~\ref{subapp:HRRT-dS}, we present some examples where we 
can see this behavior using analytic expressions for HRRT surfaces.

What does this imply for the reconstructable region in de~Sitter space? 
Using the prescription outlined in Section~\ref{subsec:reconst}, we 
find that spacetime points on the future causal boundary of a leaf, 
$\partial D_{\sigma_*}^{(+)}$, can be reconstructed.  This is a 
codimension-1 region in spacetime.  One might then think that we 
can reconstruct a codimension-0 region by considering multiple leaves, 
as was the case in a Schwarzschild-AdS black hole.  However, the 
holographic screen of de~Sitter space is itself a null hypersurface, 
with future leaves lying precisely on the future causal boundary of 
past leaves.  This means that even by using multiple leaves we cannot 
reconstruct any nonzero measure spacetime region in the de~Sitter 
(and $w \rightarrow -1$) limit.

We will now compute the reconstructable region in $(2+1)$-dimensional flat 
FRW spacetimes.  As discussed in Section~\ref{subsec:reconst}, this region 
is comprised of points that can be localized as the intersection of 
edges of entanglement wedges.  We will be considering the reconstructable 
region associated to a single leaf, and hence this prescription reduces 
to finding points located at the intersection of HRRT surfaces anchored 
to the leaf.  This alone gives us a codimension-0 reconstructable region. 
In $(2+1)$-dimensional FRW spacetimes, HRRT surfaces are simply geodesics 
in the bulk spacetime, and this problem becomes tractable.

For a $(2+1)$-dimensional flat FRW universe filled with a single fluid 
component $w$, the leaf of the holographic screen at conformal time 
$\eta_*$ is located at coordinate radius
\begin{equation}
  r_* = \frac{a}{\dot{a}}\biggr|_{\eta = \eta_*} = w \eta_*.
\label{eq:r_*}
\end{equation}
Let us parameterize the points on the leaf by $\phi \in [0, 2 \pi)$. 
Consider an interval of the leaf at time $\eta_*$ centered at $\phi_0$ 
with half opening angle $\psi$.  The HRRT surface of this subregion 
is simply the geodesic connecting the endpoints of the interval:\ 
$(\eta, \phi) = (\eta_*, \phi_0 - \psi)$ and $(\eta_*, \phi_0 + \psi)$. 
It is clear from the symmetry of the setup that if we consider a second 
geodesic anchored to an interval with the same opening angle but with 
a center $\phi'_0 \in [\phi_0 - 2\psi, \phi_0 + 2\psi]$, then the 
two geodesics will intersect at a point, specifically where $\phi 
= (\phi_0 + \phi'_0)/2$.  Using these pairs of geodesics, it is clear 
that we can reconstruct all points on all geodesics anchored to the 
leaf.  The union of these points gives us a codimension-0 region.

Can we get a larger region?  In $(2+1)$-dimensional flat FRW spacetimes, 
the answer is no.  In higher dimensions, knowing the HRRT surfaces 
for all spherical cap regions may not be sufficient to figure out 
reconstructable regions; for example, one may consider using disjoint 
regions in hopes that the new HRRT surfaces would explore regions 
inaccessible to the previous HRRT surfaces (although we do not know 
if this really leads to a larger reconstructable region).  However, in 
$2+1$ dimensions, both connected and disconnected phases of extremal 
surfaces are constructed from the geodesics already considered, so we 
gain nothing from considering disconnected subregions.  We thus find 
that the set of all points on HRRT surfaces anchored to arbitrary 
subregions on a leaf is exactly the reconstructable region from the 
state on the leaf.

\begin{figure}[t]
\begin{center}
  \includegraphics[height=6.5cm]{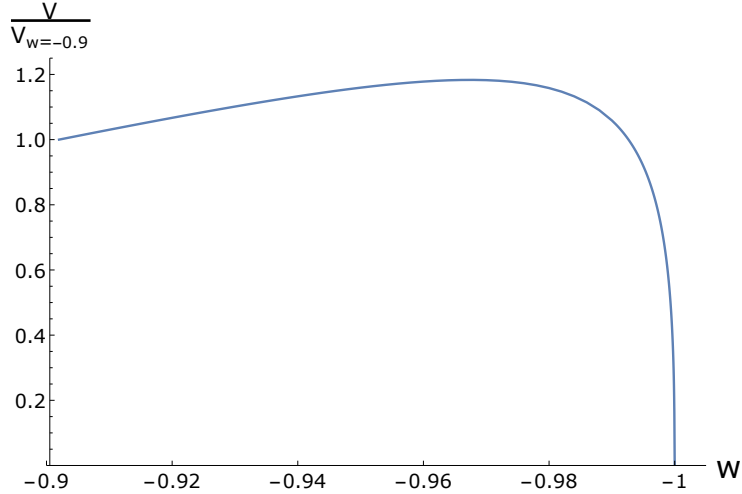}
\end{center}
\caption{The spacetime volume of the reconstructable region in 
 $(2+1)$-dimensional flat FRW universes for $w \in (-0.9, -1)$, 
 normalized by the reconstructable volume for $w = -0.9$.}
\label{fig:2dVol}
\end{figure}
In Fig.~\ref{fig:2dVol}, we show a plot of the reconstructable spacetime 
volume as a function of $w$.  It shows a qualitatively similar behavior 
to that of Fig.~\ref{fig:f}, where the reconstructable volume increases 
and then sharply declines to zero as the holographic state becomes 
maximally entropic.

We can also perform a similar analysis in higher dimensions.  Due to 
the numerical difficulty in finding extremal surfaces, here we restrict 
ourselves to the region reconstructable by spherical cap regions 
(which may indeed be the fully reconstructable region) and to only 
a few representative values of $w$.  The results are plotted in 
Fig.~\ref{fig:recon3D} for $(3+1)$-dimensional FRW universes.  These 
demonstrate the behavior that the extremal surfaces, and hence the 
reconstructable region, becomes more and more null as $w \rightarrow -1$.
\begin{figure}
\hfill
\subfigure[$w = 1$, $\eta_* = 1$]
  {\includegraphics[width=.3 \linewidth]{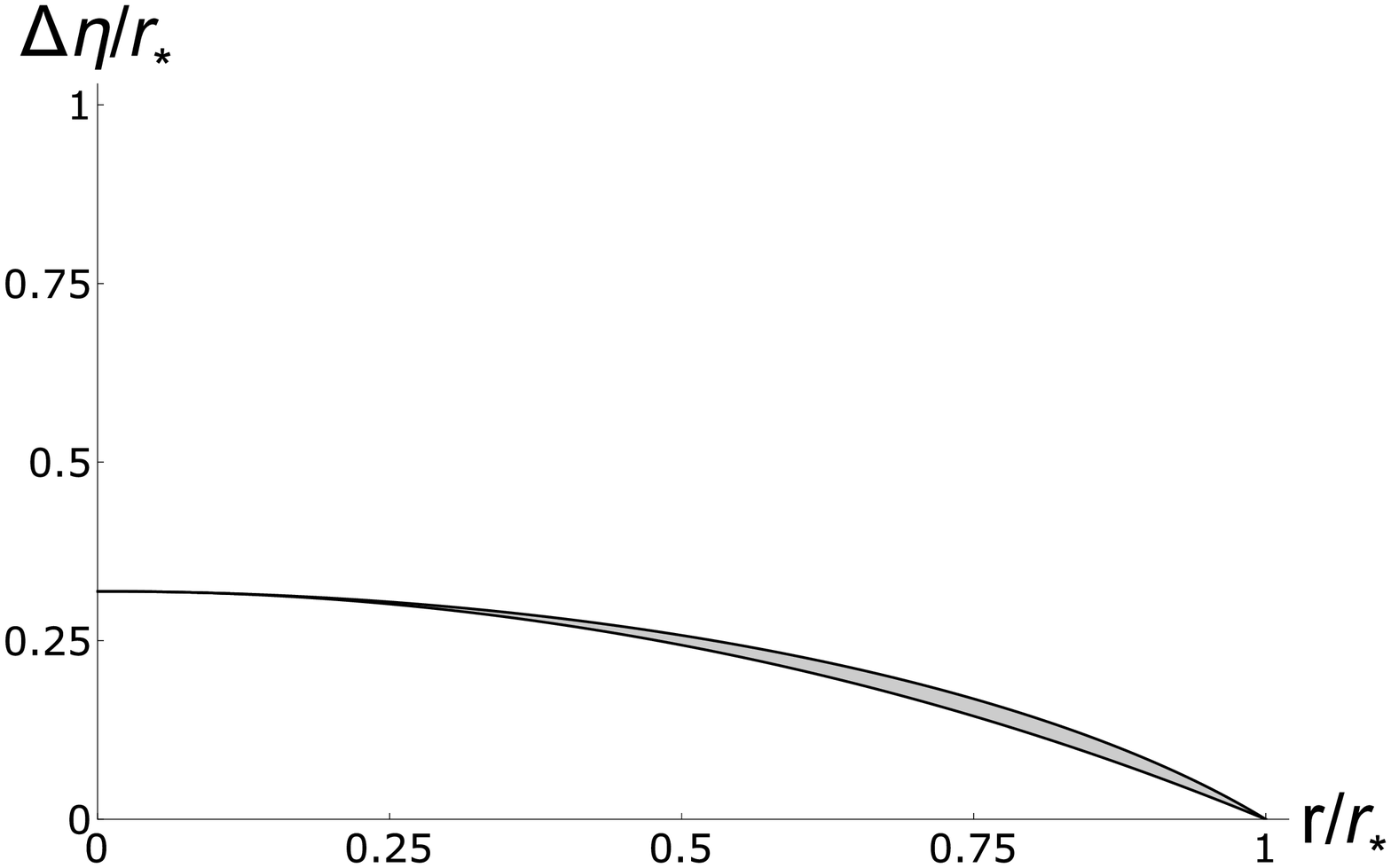}}
\hfill
\subfigure[$w = 0$, $\eta_* = 1$]
  {\includegraphics[width=.3 \linewidth]{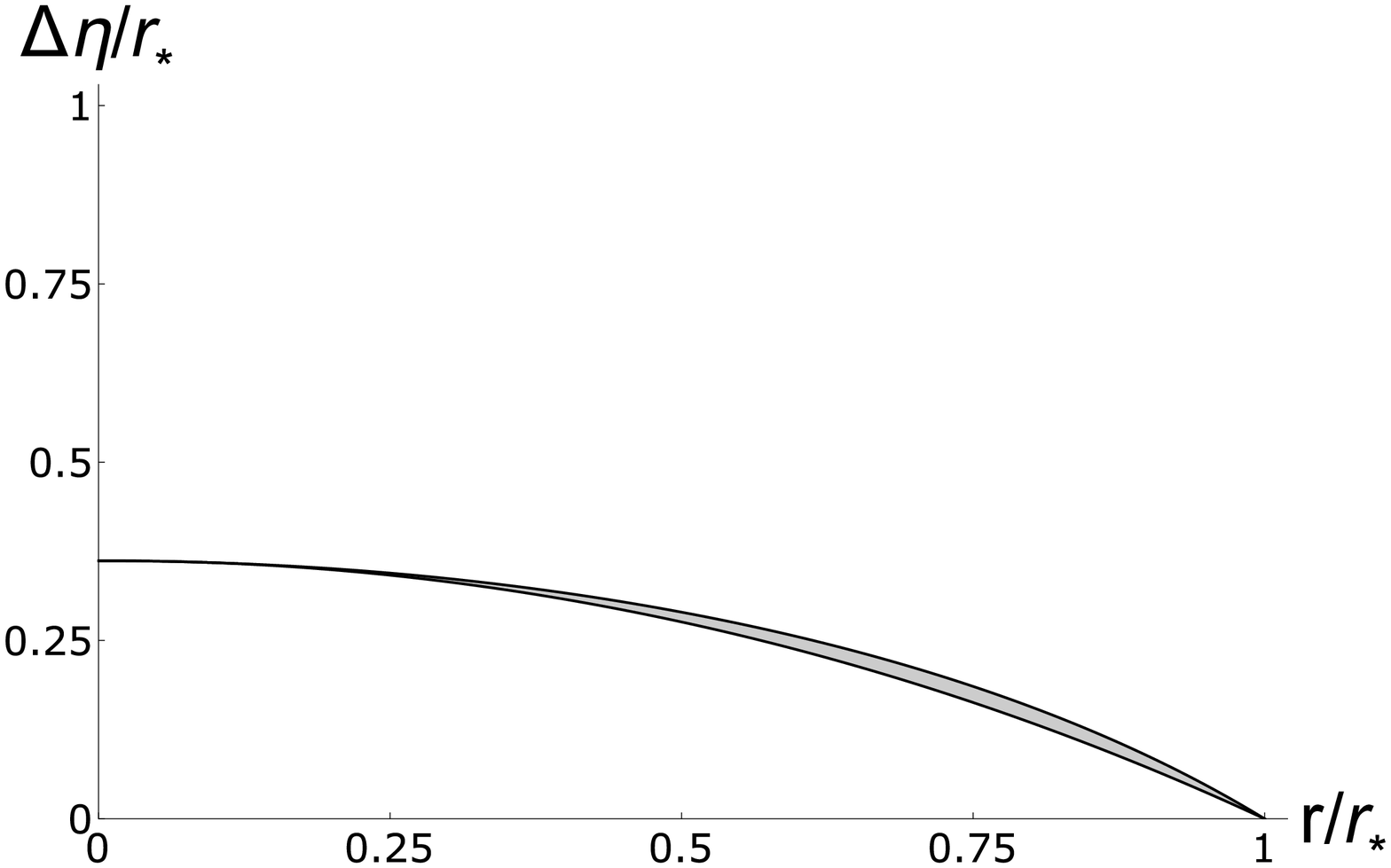}}
\hfill
\subfigure[$w = -4/5$, $\eta_* = -1$]
  {\includegraphics[width=.3 \linewidth]{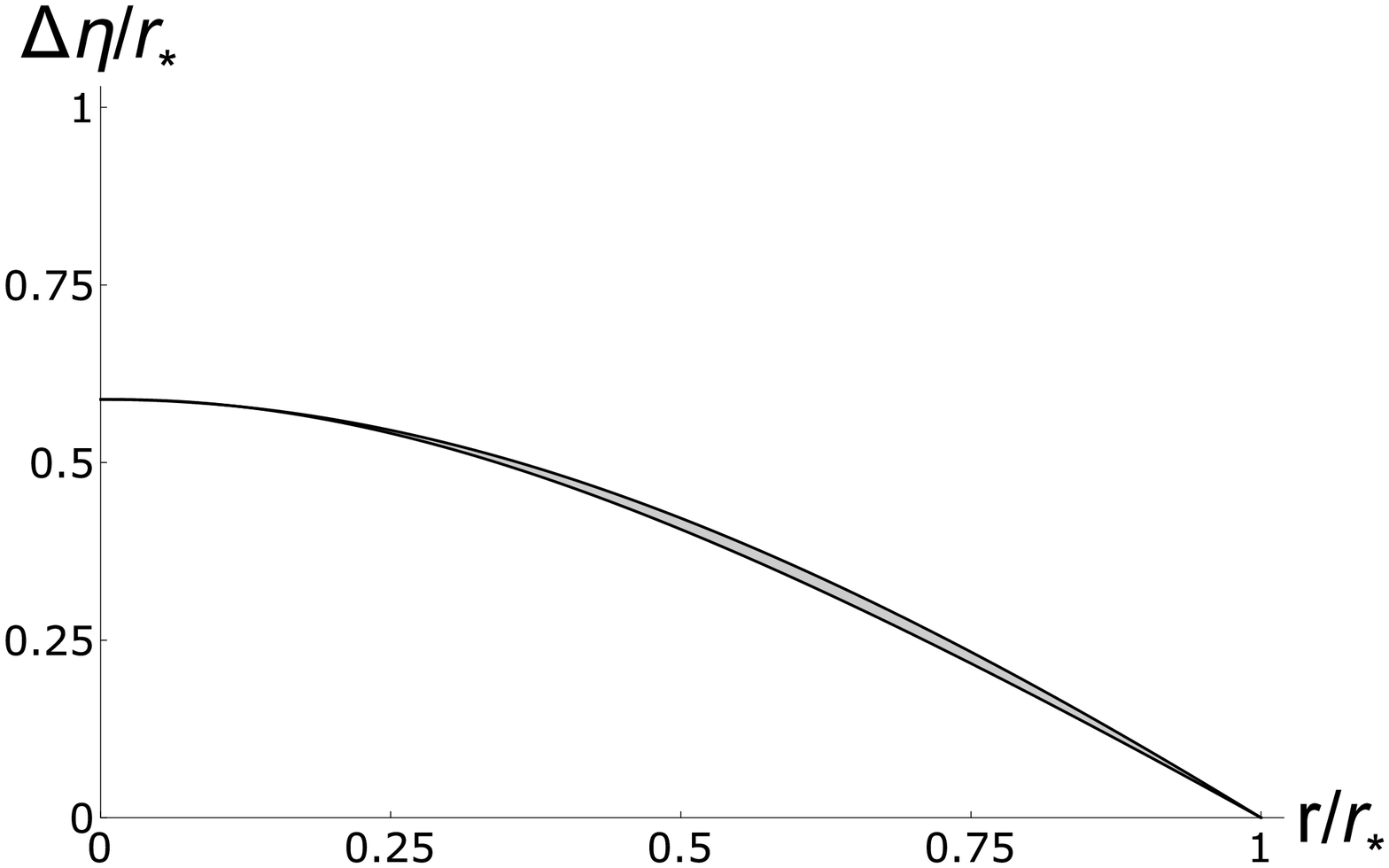}}
\caption{Reconstructable spacetime regions for various values of $w$ 
 in $(3+1)$-dimensional flat FRW universes.  The horizontal axis is 
 the distance from the center, normalized by that to the leaf.  The 
 vertical axis is the difference in conformal time from the leaf, 
 normalized such that null ray from the leaf would reach $1$.  The 
 full reconstructable region for each leaf would be the gray region 
 between the two lines rotated about the vertical axis.}
\label{fig:recon3D}
\end{figure}

The discussion in this subsection says that the reconstructable 
spacetime region disappears in the holographic theory of FRW spacetimes 
as the holographic state becomes maximally entropic in the de~Sitter 
limit.  While a microstate becoming maximally entropic does not directly 
mean that states representing the corresponding spacetime become typical 
in the holographic Hilbert space (since the number of independent 
microstates could still be small), we expect that the former indeed 
implies the latter as usual thermodynamic intuition suggests; 
see Section~\ref{sec:linearity} for further discussion.  In any 
event, since typical states in a holographic theory are maximally 
entropic, we expect that the reconstructable spacetime region disappears 
as the holographic state becomes typical.

An important implication of the analysis here is that a holographic 
theory of de~Sitter space cannot be obtained by taking a limit in 
the holographic theory of FRW spacetimes.  A holographic theory 
of exact de~Sitter space, if any, would have to be formulated in 
a different manner.%
\footnote{Another instance in which spacetime disappears is when the 
 holographic description changes from that based on a past holographic 
 screen (foliated by marginally anti-trapped surfaces) to a future 
 holographic screen (marginally trapped surfaces).  Such a change 
 of description may occur in a spacetime with a late-time collapsing 
 phase, e.g.\ in a closed FRW universe with the holographic screen 
 constructed naturally in an observer-centric manner.  (For an 
 interpretation of such spacetime, see Ref.~\cite{Nomura:2016ikr}.) 
 Since the leaf at the time of the transition is extremal, the analysis 
 here indicates that the spacetime region reconstructable from a single 
 leaf disappears at that time.  This makes the necessity of the change 
 of the description more natural. \label{ft:collapse}}

\subsection{Maximally entropic states have no spacetime}
\label{subsec:proof}

In this subsection, we provide a proof for the statement that the 
directly reconstructable region of a maximally entropic leaf is either 
the leaf itself or a subset of its null cone.  We use this result to 
argue that maximally entropic states have no spacetime.  This heavily 
utilizes the maximin techniques developed in Ref.~\cite{Wall:2012uf}.

\begin{theorem}
Consider a compact codimension-2 spacelike surface, $\sigma$, 
with area ${\cal A}$, living in a spacetime that satisfies $R_{ab} 
v^a v^b \ge 0$ for all null vectors $v^a$.  Suppose HRRT surfaces 
can consistently be anchored to $\sigma$.%
\footnote{This requires the expansion of the two null hypersurfaces 
 bounding $D(\sigma)$ to have $\theta \leq 0$, where $D(\sigma)$ 
 is the interior domain of dependence of some achronal set 
 whose boundary is $\sigma$.  These HRRT surfaces are guaranteed 
 to exist and satisfy basic entanglement inequalities; see 
 Refs.~\cite{Sanches:2016sxy,Miyaji:2015yva}.}
Let $m(\Gamma)$ denote the HRRT surface anchored to the boundary, 
$\partial \Gamma$, of a subregion $\Gamma$ of $\sigma$.

If $\norm{m(\Gamma)} = {\rm min}\{ \norm{\Gamma},\norm{\bar{\Gamma}} \}, 
\forall \Gamma \subset \sigma$, then either $\sigma$ is a bifurcation 
surface or all of the HRRT surfaces of $\sigma$ lie on a non-expanding 
null hypersurface connected to $\sigma$.
\label{th:1}
\end{theorem}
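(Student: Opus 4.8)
The plan is to exploit the maximin construction of Wall for HRRT surfaces together with the saturation hypothesis $\norm{m(\Gamma)} = \min\{\norm{\Gamma},\norm{\bar\Gamma}\}$ applied to a cleverly chosen family of subregions. The key observation is that for a subregion $\Gamma$ and its complement $\bar\Gamma$, strong subadditivity (equivalently, the entanglement wedge nesting that maximin guarantees) forces the HRRT surface of a slightly enlarged region to ``contain'' that of a smaller one in a controlled way. When all these inequalities are simultaneously saturated, the usual strict inequalities coming from focusing of null congruences must degenerate, and that degeneration is exactly the statement that the relevant null hypersurface is non-expanding.

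First I would set up the maximin representation: on a Cauchy slice $\Sigma$ containing $\sigma$, $m(\Gamma)$ is the minimal-area surface homologous to $\Gamma$, and we maximize over slices. I would then take two subregions $\Gamma_1 \subset \Gamma_2$ with $\norm{\Gamma_i} < \norm{\bar\Gamma_i}$, so that $m(\Gamma_i) = \Gamma_i$ (as anchored surfaces lying in $\sigma$ itself — here using that $\Gamma_i$ is achronal and that its own area is the minimum). The nontrivial input is when we instead take $\Gamma$ with $\norm{\Gamma} = \norm{\bar\Gamma}$ exactly (a ``bisecting'' region): then both $\Gamma$ and $\bar\Gamma$ realize the minimum, so there are (at least) two HRRT surfaces of equal area anchored to $\partial\Gamma$, namely $\Gamma$ and $\bar\Gamma$. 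Following the argument already used in Section~\ref{subsec:dS_max-ent} for $\gamma'_A$ versus $\gamma''_A$, I would fire a null congruence from one candidate surface and intersect it with a slice containing the other; focusing ($R_{ab}v^av^b \ge 0$ plus the Raychaudhuri equation) gives a strict area decrease \emph{unless} the congruence has vanishing expansion and shear all along. Saturation of $\norm{m(\Gamma)} = \norm{\Gamma}$ then forbids that strict decrease, forcing the null hypersurface interpolating between $\Gamma$ and $\bar\Gamma$ to be non-expanding.

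Next I would promote this from a single bisecting region to all of $\sigma$. Sweeping the bisecting region $\Gamma$ over a one-parameter family whose boundaries $\partial\Gamma$ foliate $\sigma$, the non-expanding null hypersurfaces obtained above sweep out a null hypersurface $\mathcal{N}$ connected to $\sigma$ on which \emph{all} the HRRT surfaces lie; non-expansion is inherited fiberwise. Then I would split into the two conclusions of the theorem. Either the null expansion of $\mathcal{N}$ vanishes identically \emph{and} the complementary null normal also has vanishing expansion on $\sigma$ — which, combined with $\sigma$ being extremal (it is its own HRRT surface, hence $\theta_{\pm}=0$ on $\sigma$), forces $\sigma$ to be a bifurcation surface of a Killing horizon in the sense of Wall's results on extremal surfaces with vanishing bi-normal expansions; or $\mathcal{N}$ is a genuine non-expanding (but not bifurcate) null hypersurface through $\sigma$, which is the second alternative. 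I would lean on the footnoted existence results (Refs.~\cite{Sanches:2016sxy,Miyaji:2015yva}) to know the HRRT surfaces behave well, and on $R_{ab}v^av^b \ge 0$ throughout for all focusing arguments.

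The main obstacle I anticipate is the passage from ``each individual bisecting region yields a non-expanding null segment'' to ``a single globally non-expanding null hypersurface $\mathcal{N}$ carries all HRRT surfaces,'' because a priori the HRRT surfaces of non-bisecting regions (those with $\norm{\Gamma} \ne \norm{\bar\Gamma}$, which simply equal $\Gamma$) live in $\sigma$, not on $\mathcal{N}$, and one must reconcile ``lying in $\sigma$'' with ``lying on $\mathcal{N}$'' — this works only because $\sigma \subset \mathcal{N}$ and the surfaces are anchored on $\sigma$, but making the continuity/covering argument airtight (including the measure-zero exceptional subregions flagged in Section~\ref{subsec:dS_max-ent}) is the delicate part. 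A secondary subtlety is ensuring the focusing argument is valid when the two candidate HRRT surfaces are not infinitesimally close, i.e.\ that the global null congruence from $\bar\Gamma$ actually reaches a slice through $\Gamma$ without running into caustics before arriving — handling caustics (where the congruence only gains area-decrease, never increase) in favor of the inequality is routine but must be stated.
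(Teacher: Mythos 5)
There is a genuine gap at the foundation of your argument: from the hypothesis $\norm{m(\Gamma)} = \min\{\norm{\Gamma},\norm{\bar{\Gamma}}\}$ you conclude that $m(\Gamma) = \Gamma$ as a surface, i.e.\ that the HRRT surface literally lies in $\sigma$. The hypothesis is only an equality of \emph{areas}; it does not identify the surfaces, and a generic subregion $\Gamma$ of a generic $\sigma$ is not even extremal, so it cannot be an HRRT surface at all. Worse, the identification fails precisely in the case that realizes the second alternative of the theorem: in the de~Sitter example of Section~\ref{subsec:dS_max-ent}, every $m(\Gamma)$ has the same area as $\Gamma$ but lies on the future null cone of the leaf, not on the leaf. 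By assuming $m(\Gamma) = \Gamma$ for non-bisecting regions you are assuming one horn of the dichotomy you are trying to prove, and your subsequent ``bisecting region'' construction inherits the same problem (neither $\Gamma$ nor $\bar{\Gamma}$ is a legitimate candidate HRRT surface, and there is no natural null congruence ``interpolating'' between two complementary pieces of the same codimension-2 surface on which to run the focusing argument).

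The mechanism that actually produces the dichotomy is absent from your proposal. The paper takes three adjacent subregions $A$, $B$, $C$, places $m(ABC)$ and $m(B)$ on a common maximin slice $\Sigma$ (Wall's Theorem~17.h), and forms the representatives $m(AB)_\Sigma$, $m(BC)_\Sigma$; the maximal-entropy hypothesis saturates strong subadditivity, which by Eq.~(\ref{eq:proof-ineq}) forces $\norm{m(AB)_\Sigma} = \norm{m(AB)}$, $\norm{m(BC)_\Sigma} = \norm{m(BC)}$, and forces the two representatives to intersect. The two alternatives of the theorem then come from whether $\Sigma$ is spacelike or null at that intersection: if spacelike, corner-smoothing contradicts minimality unless the surfaces coincide, which propagated over all $A,B,C$ makes $\sigma$ itself extremal (a bifurcation surface); if null, one gets $\theta_u = \theta_v = 0$ at every intersection point, and scanning over bipartitions shows all HRRT surfaces lie on a non-expanding null hypersurface attached to a marginal $\sigma$. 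Your proposal never invokes saturation of strong subadditivity or the representative construction, which is where the area hypothesis actually does its work, so the proof does not go through as written.
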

 
\begin{proof}
If $\Gamma_1$ and $\Gamma_2$ are subregions of $\sigma$, we will 
abbreviate $\Gamma_1 \cup \Gamma_2$ as $\Gamma_1 \Gamma_2$.  Let 
$m(\Gamma)_\Sigma$ denote the representative of $m(\Gamma)$ on a 
complete achronal surface $\Sigma$, defined by the intersection of 
$\Sigma$ with a null congruence shot out from $m(\Gamma)$.  From 
the extremality of $m(\Gamma)$, $R_{ab} v^a v^b \ge 0$, and the 
Raychaudhuri equation, $\norm{m(\Gamma)_\Sigma} \leq \norm{m(\Gamma)}$.

\begin{figure}[t]
\begin{center}
  \includegraphics[height=7cm]{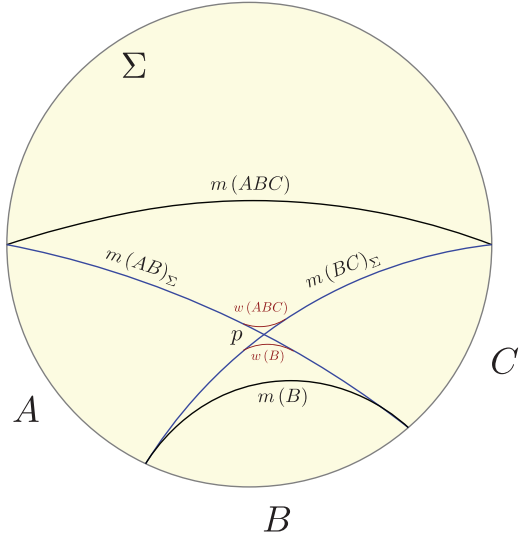}
\end{center}
\caption{Diagrams representing the achronal surface $\Sigma$ in 
 which two HRRT surfaces, $m(ABC)$ and $m(B)$, live.  $m(AB)_\Sigma$ 
 and $m(BC)_\Sigma$ are the representatives of $m(AB)$ and $m(BC)$, 
 respectively.  They are shown to be intersecting at $p$.  On a 
 spacelike $\Sigma$, one could deform around this intersection to 
 create two new surfaces with smaller areas.}
\label{fig:proof1}
\end{figure}
Consider three connected subregions $A$, $B$, $C$ of $\sigma$ such 
that $\partial A \cap \partial B \neq \emptyset$, $\partial B \cap 
\partial C \neq \emptyset$ where both such intersections are 
codimension-3, and $\norm{A \cup B \cup C} \leq \norm{\sigma}/2$; 
see Fig.~\ref{fig:proof1} for a diagram.  By Theorem~17.h of 
Ref.~\cite{Wall:2012uf}, take $m(ABC)$ and $m(B)$ to be on the 
same achronal surface, $\Sigma$.  Now, consider the representatives 
$m(AB)_\Sigma$ and $m(BC)_\Sigma$.  From the properties of 
representatives and maximin surfaces, we have
\begin{equation}
  S(AB) + S(BC) 
  \ge \frac{\norm{m(AB)_\Sigma}}{4 l_{\rm P}^{d-1}} 
    + \frac{\norm{m(BC)_\Sigma}}{4 l_{\rm P}^{d-1}} 
  \ge S(ABC) + S(B).
\label{eq:proof-ineq}
\end{equation}
The assumption of maximal entropies then tells us that strong 
subadditivity is saturated, and hence
\begin{equation}
\begin{aligned}
\norm{m(AB)_\Sigma} &= \norm{m(AB)}, \\
\norm{m(BC)_\Sigma} &= \norm{m(BC)}.
\end{aligned}
\label{eq:rep-eq}
\end{equation} 
Additionally, $m(AB)_\Sigma \cap m(BC)_\Sigma \neq \emptyset$.

We have two cases depending on the nature of $\Sigma$.
\vspace{0.2cm}

\noindent
Case~1: $m(ABC)$, $m(B)$, $m(AB)_\Sigma$, and $m(BC)_\Sigma$ live on 
$\Sigma$ which is a non-null hypersurface.

\begin{itemize}
\item[]
Suppose $m(AB)_\Sigma \cap m(BC)_\Sigma$ is a codimension-3 
surface, meaning they intersect through some surface, $p$, depicted 
in Fig.~\ref{fig:proof1}.  One could then smooth out the ``corners'' 
around $p$ to create new surfaces homologous to $ABC$ and $B$.  This 
is depicted through the maroon lines in Fig.~\ref{fig:proof1}.  By the 
triangle inequality, these new, smoothed out surfaces, $w(ABC)$ and 
$w(B)$, would have less total area than $m(ABC) \cup m(B)$ because 
$p \in \Sigma$, which is spacelike.  However, this contradicts the 
minimality of $m(ABC)$ and $m(B)$:
\begin{align}
  & (\norm{A} + \norm{B} + \norm{C}) + \norm{B} 
  = \norm{m(ABC)} + \norm{m(B)} 
  \leq \norm{w(ABC)} + \norm{w(B)} 
\nonumber\\
  & \qquad < \norm{m(AB)_\Sigma} + \norm{m(BC)_\Sigma} 
  = (\norm{A} + \norm{B}) + (\norm{B} + \norm{C}).
\label{eq:contrad}
\end{align}
Therefore, $m(AB)_\Sigma$ and $m(BC)_\Sigma$ cannot intersect through 
some codimension-3 surface, yet they must still intersect.  This requires 
$m(AB)_\Sigma$ and $m(BC)_\Sigma$ to coincide somewhere, a neighborhood 
of $x$, and by Theorem~4.e of Ref.~\cite{Wall:2012uf} these two 
surfaces must coincide at every point connected to $x$.  This means 
that $m(AB) = m(A) \cup m(B)$ and $m(BC) = m(B) \cup m(C)$.  The only 
way this can consistently occur for all possible $A$, $B$, and $C$ is 
for $m(\Gamma) \subset \sigma$.  This means that $\sigma$ itself is 
extremal, and hence is a bifurcation surface.
\end{itemize}

\noindent
Case~2: $m(ABC)$, $m(B)$, $m(AB)_\Sigma$, and $m(BC)_\Sigma$ live on 
hypersurface $\Sigma$ which is at least partially null.

\begin{itemize}
\item[]
Suppose $m(AB)_\Sigma \cap m(BC)_\Sigma$ is a codimension-3 surface, 
$p$.  If at $p$, $\Sigma$ is null and non-expanding, then smoothing 
out the intersection will not result in new surfaces with smaller area. 
This is the condition that $\theta_u = 0$ on the null hypersurface, 
$\Sigma_u$, coincident with $\Sigma$ at $p$, where $u$ is the null 
vector generating $\Sigma_u$ at $p$.  Hence, the representatives can 
intersect at $p$ and simultaneously saturate strong subadditivity. 
Additionally, because $\norm{m(BC)_\Sigma} = \norm{m(BC)}$, we know 
that $\theta_v = 0$ along the hypersurface generating the representatives 
of $m(BC)$, where $v$ is the null vector generating this hypersurface. 
Therefore at $p$, $\theta_u = \theta_v = 0$.

\begin{figure}[t]
\begin{center}
  \includegraphics[height=6.6cm]{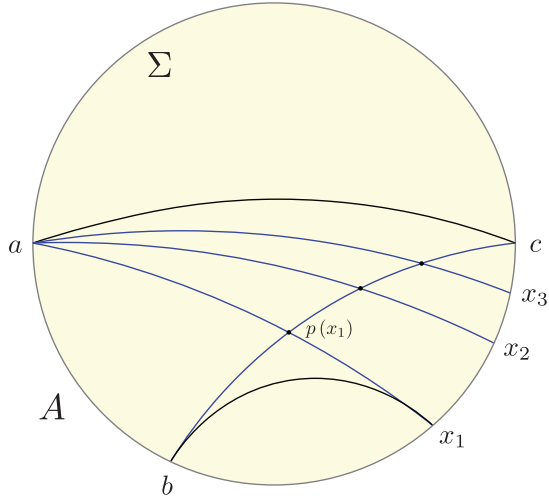}
\end{center}
\caption{This depicts how one can scan across the representative 
 $m(BC)_\Sigma$ by bipartitioning $BC$ on the achronal surface $\Sigma$. 
 At each of these intersections, $p(x_i)$, $\theta_u = \theta_v = 0$ 
 if the state on the leaf is maximally entropic and $\Sigma$ is null 
 and non-expanding.}
\label{fig:proof2}
\end{figure}
We can now scan across $m(BC)_\Sigma$ by considering its intersection 
with $m(AB')_\Sigma$ where $(B', C')$ is a bipartition of $B \cup C$ 
where $\partial{B'} \cap \partial{A} \neq \emptyset$, and then 
considering all such bipartitions.  This is illustrated in 
Fig.~\ref{fig:proof2} by splitting up $B \cup C$ at a few points 
labeled by $x_i$; for example, $B' = [b, x_3]$ and $C' = [x_3, c]$ 
is one such allowable bipartition.  By continuity, all of 
$m(BC)_\Sigma$ will be scanned.%
\footnote{We believe this is sufficient to scan over the whole surface 
 assuming the spacetime is smooth.  Additionally, Eq.~(\ref{eq:rep-eq}) 
 requires there to be no energy density between an HRRT surface and 
 its representative, this will preclude jumps in the representatives 
 due to entanglement shadows and the like.}

By the argument in the previous paragraph, all intersection points 
along $m(BC)_\Sigma$ must then have $\theta_u = \theta_v = 0$.  Assuming 
nondegeneracy, $m(BC)_\Sigma$ must therefore be the HRRT surface 
$m(BC)$.  Additionally, every point of $m(BC)$ lives on some null, 
non-expanding hypersurface and at $\partial m(BC)$ this surface connects 
to $\sigma$.  Hence, at $\partial m(BC)$, $\sigma$ must be marginal.
This argument can be repeated for any set of appropriate subregions. 
This tells us that all HRRT surfaces have the previously stated 
properties.

Now, by Theorem~17.h of Ref.~\cite{Wall:2012uf}, we can construct an 
achronal surface, $\Sigma$, that is foliated by HRRT surfaces.  Each 
point of $\Sigma$ must now be null and non-expanding.  Additionally, 
the boundary of $\Sigma$, $\sigma$, must be marginal.  Let $k$ denote 
the vector in this local marginal direction.  This uniquely specifies 
$\Sigma$ as the null non-expanding hypersurface generated by $k$. 
This is true for all $\Sigma$ foliated by HRRT surfaces, and each 
HRRT surface can belong to some foliation of a $\Sigma$.%
\footnote{Under the assumption of the theorem, the HRRT surface of 
 disconnected subregions will always be disconnected.  This is because 
 the disconnected surface is extremal.}
Hence all extremal surfaces anchored to $\sigma$ must belong to a 
non-expanding null hypersurface.

Back to the beginning, if the intersection of $m(AB)_\Sigma$ and 
$m(BC)_\Sigma$ is codimension-2, then the argument from Case~1 applies 
and $\sigma$ must be extremal.
\end{itemize}
This concludes the proof of Theorem~\ref{th:1}.
\end{proof}

\begin{corollary}
Consider a codimension-2 surface, $\sigma$, with area ${\cal A}$, 
living in a spacetime satisfying $R_{ab} v^a v^b \ge 0$.  Let $m(\Gamma)$ 
denote the HRRT surface anchored to $\partial \Gamma$.

If $\sigma$ is not marginal, then it cannot satisfy $\norm{m(\Gamma)} 
= {\rm min}\{ \norm{\Gamma},\norm{\bar{\Gamma}} \}, \forall \Gamma 
\subset \sigma$.
\label{corr:1}
\end{corollary}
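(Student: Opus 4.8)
The plan is to derive the Corollary as the contrapositive of Theorem~\ref{th:1}, so the argument is essentially a one-line deduction once we verify that both alternatives in Theorem~\ref{th:1}'s conclusion entail marginality of $\sigma$. I would argue by contradiction: assume $\sigma$ is not marginal yet $\norm{m(\Gamma)} = {\rm min}\{\norm{\Gamma},\norm{\bar{\Gamma}}\}$ for every $\Gamma \subset \sigma$. Writing down this hypothesis already presupposes that $m(\Gamma)$ exists for all subregions, i.e.\ that HRRT surfaces can consistently be anchored to $\sigma$; combined with the stated null energy condition $R_{ab} v^a v^b \ge 0$, this is exactly the setup of Theorem~\ref{th:1}. (If HRRT surfaces cannot be consistently anchored to $\sigma$ there is nothing to prove, since the displayed equality then cannot even be imposed.)

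Next I would invoke Theorem~\ref{th:1} and dispose of its two alternatives. In the first, $\sigma$ is a bifurcation surface; such a surface is extremal, so both null expansions vanish on it, and it is in particular marginal. In the second, every HRRT surface of $\sigma$ lies on a non-expanding null hypersurface $H$ attached to $\sigma$, i.e.\ $\sigma$ is a cross-section of $H$ from which the null generators of $H$ emanate. Since $H$ is non-expanding, the generator expansion $\theta$ vanishes identically on $H$, hence at $\sigma$; so $\sigma$ possesses a null normal direction of vanishing expansion and is again marginal. Either way $\sigma$ is marginal, contradicting the assumption, which establishes the contrapositive and hence the Corollary.

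I do not expect a real obstacle here, as the Corollary sits logically downstream of Theorem~\ref{th:1}. The only points demanding a moment's attention are confirming that the non-expanding-null-hypersurface alternative genuinely forces marginality at every point of $\sigma$ (it does, because the generator direction has zero expansion all along $H$, including on $\sigma$), and flagging that the hypothesis on $m(\Gamma)$ silently supplies the ``HRRT surfaces can consistently be anchored'' assumption of the theorem. For a more self-contained route one could instead quote directly from the body of the proof of Theorem~\ref{th:1}, where it is shown that $\sigma$ is marginal at $\partial m(\Gamma)$ for every appropriate $\Gamma$, and note that every point of $\sigma$ arises as such a boundary.
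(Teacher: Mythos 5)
Your proposal is correct and takes the same route as the paper, whose entire proof is the single sentence that the corollary is the contrapositive of Theorem~\ref{th:1}. Your extra step of checking that both alternatives in the theorem's conclusion (bifurcation surface, or HRRT surfaces lying on a non-expanding null hypersurface attached to $\sigma$) indeed force $\sigma$ to be marginal is a useful piece of diligence the paper leaves implicit, but it does not constitute a different argument.
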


\begin{proof}
The contrapositive of this statement is proven by Theorem~\ref{th:1}.
\end{proof}

Consider the case that $\sigma$ is a leaf of a past holographic screen. 
If the leaf is extremal and the screen is not null, then the directly 
reconstructable spacetime is just the leaf itself.  Additionally, this 
tells us the holographic screen must halt at this point.  This indicates 
the end of a holographic description based on the past holographic screen. 
At this point, one can stitch the beginning of a new future holographic 
screen that starts at a bifurcation surface, patching together two 
holographic descriptions.  This occurs in collapsing universes; see 
footnote~\ref{ft:collapse}.

In the other case, if all of the HRRT surfaces of $\sigma$ have area 
corresponding to the maximal entropy, then all of the extremal surfaces 
must lie on the future null cone of the leaf, where this null cone is 
non-expanding and compact.  This cone itself is the limit of a past 
holographic screen because $\theta_k = 0$.  Barring the existence of 
a continuum of compact, non-expanding, null hypersurfaces, the holographic 
screen then follows along this null surface from the leaf.  Hence the 
directly  reconstructable region will only be the screen itself, exactly 
as we observed in the case of de~Sitter space.  Again, we see that 
maximal entanglement corresponds to the end of a holographic description, 
but in this case the screen does not end; this corresponds to a stable 
final state.

In Section~\ref{subsec:AdS-BH}, we took the boundary to be at some 
large, fixed radius in AdS space.  One may be concerned that this 
cutoff surface is not marginal, and hence Theorem~\ref{th:1} does 
not apply.  However, in the limit that the black hole radius approaches 
the boundary, then the statement holds because the horizon of the black 
hole satisfies the needed properties.  Note that until this final limit, 
Corollary~\ref{corr:1} tells us that the entanglement of the boundary 
cannot be maximal.

Finally we are prepared to make a statement about typicality.  Typical 
boundary states are maximally entangled, and hence the argument shows 
us that for holographic theories living on screens (an instance of which 
is AdS/CFT), typical states have no directly reconstructable spacetime.

\section{Spacetime Emerges through Deviations from Maximal Entropy}
\label{sec:atypical}

We have seen that when the holographic state becomes maximally 
entropic, spacetime defined as the directly reconstructable region 
disappears.  Conversely, bulk spacetime emerges when we change 
parameters, e.g.\ the mass of the black hole or the equation of state 
parameter $w$, deviating the state from maximal entropy.  In this 
section, we study how this deviation may occur and find qualitative 
differences between the cases of Schwarzschild-AdS and flat FRW 
spacetimes.  This has important implications for the structures 
of holographic theories representing these spacetimes.

\subsection{CFT with subcutoff temperatures}
\label{subsec:lower-T}

Consider the setup discussed in Section~\ref{subsec:AdS-BH}:\ a large 
black hole in asymptotically AdS space.  The holographic theory is then 
a local quantum (conformal) field theory.  When the temperature of the 
system is at the cutoff scale, the holographic state has maximal entropies, 
Eq.~(\ref{eq:max-ent}).  As we lower the temperature, the state deviates 
from a maximally entropic one, and correspondingly bulk spacetime 
emerges---the horizon of the black hole recedes from the cutoff surface, 
and the reconstructable spacetime region appears; see Fig.~\ref{fig:f}.

Suppose the temperature of the system $T$ is lower than the cutoff scale, 
$T < \Lambda$.  We are interested in the behavior of von~Neumann entropies 
of subregions of characteristic length $L$ in the boundary theory.  These 
entropies are calculated holographically by finding the areas of the 
HRRT surfaces anchored to subregions of the cutoff surface $r = R$. 
We analyze this problem analytically for spherical cap regions in 
Appendix~\ref{subapp:HRRT_S-AdS}.  For sufficiently high temperature, 
$T \gg (\Lambda^{d-2}/l)^{1/(d-1)}$, we find that the entanglement 
entropy for a subregion $A$ behaves as
\begin{equation}
  S_A \approx 
  \left\{ \begin{array}{ll}
    c A_{d-2} L^{d-2} \Lambda^{d-2} &
    \mbox{for } L \ll L_*, \\
    c A_{d-2} \frac{r_+^{d-1} L^{d-1}}{l^{2d-2}} \approx 
      c \left(\frac{T}{\Lambda}\right)^{d-1} A_{d-2}L^{d-1}\Lambda^{d-1} &
    \mbox{for } L \gg L_*.
  \end{array} \right.
\label{eq:S_A-case}
\end{equation}
Here,
\begin{equation}
  L_* \approx \frac{l^2 R^{d-2}}{r_+^{d-1}} 
  \approx \frac{\Lambda^{d-2}}{T^{d-1}},
\label{eq:L_tr}
\end{equation}
$c \approx (l/l_{\rm P})^{d-1}$ is the central charge of the CFT, 
and $A_{d-2}$ is the area of the $(d-2)$-dimensional unit sphere. 
We find that the scaling of the entanglement entropy changes (smoothly) 
from an area law to a volume law as $L$ increases.  For $T \ll 
(\Lambda^{d-2}/l)^{1/(d-1)}$, i.e.\ $L_* \gg l$, the entanglement 
entropy obeys an area law for all subregions.  We note that the length 
in the boundary theory is still measured in terms of the $d$-dimensional 
metric at infinity with the conformal factor stripped off.  The cutoff 
length is thus $1/\Lambda \approx O(l^2/R)$, and the size of the boundary 
space is $\approx O(l)$.

While we have analyzed spherical cap subregions, the behavior of the 
entanglement entropy found above is more general.  When the temperature 
is lowered from the cutoff scale, the entanglement entropy $S_A$ 
deviates from the maximal value.  Defining
\begin{equation}
  Q_A = \frac{S_A}{S_{A,{\rm max}}} 
  = \frac{S_A}{\norm{A}/4l_{\rm P}^{d-1}},
\label{eq:Q_A-def}
\end{equation}
we find that
\begin{equation}
  Q_A \approx 
  \left\{ \begin{array}{ll}
    \frac{1}{L \Lambda} &
    \mbox{for } L \ll \frac{\Lambda^{d-2}}{T^{d-1}}, \\
    \left(\frac{T}{\Lambda}\right)^{d-1} &
    \mbox{for } L \gg \frac{\Lambda^{d-2}}{T^{d-1}}.
  \end{array} \right.
\label{eq:Q_A-BH}
\end{equation}
Here, we have assumed that subregion $A$ is characterized by a single 
length scale $L$, and that the temperature is sufficiently high, $T \gg 
(\Lambda^{d-2}/l)^{1/(d-1)}$.  (If $T \ll (\Lambda^{d-2}/l)^{1/(d-1)}$, 
$Q_A \approx 1/L\Lambda$ for all subregions.)  This behavior is depicted 
schematically in Fig.~\ref{fig:Q_A-AdS}.
\begin{figure}[t]
\begin{center}
  \includegraphics[height=6.5cm]{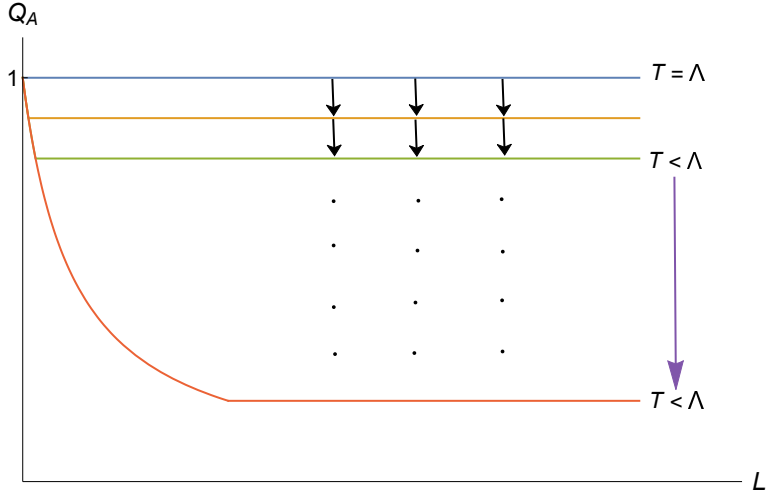}
\end{center}
\caption{A schematic depiction of the entanglement entropy in the 
 Schwarzschild-AdS spacetime, normalized by the maximal value of entropy 
 in the subregion, $Q_A = S_A/S_{A,{\rm max}}$, and depicted as a 
function of the size $L$ of subregion $A$; see Eq.~(\ref{eq:Q_A-BH}). 
 The scales of the axes are arbitrary.  As the mass of the black hole 
 is lowered (the temperature $T$ of the holographic theory is reduced 
 from the cutoff $\Lambda$), $Q_A$ deviates from $1$ in a specific 
 manner.}
\label{fig:Q_A-AdS}
\end{figure}

We find that as the temperature is lowered from the cutoff scale, two 
things occur for entanglement entropies:
\begin{itemize}
\item
For sufficiently large subregions, the entanglement entropies still 
obey a volume law, but the coefficient becomes smaller.
\item
The more the temperature is lowered, the further subregions have 
entanglement entropies obeying an area law.  This occurs from shorter 
scales, i.e.\ subregions with smaller sizes.
\end{itemize}
These make the entanglement entropies deviate from the maximal value 
and lead to the emergence of reconstructable spacetime:\ the region 
between the black hole horizon and the cutoff surface, $r_+ < r \leq R$.

\subsection{FRW universes with {\boldmath $w > -1$}}
\label{subsec:FRW-Q}

As spacetime emerges by reducing the mass of the black hole in the 
Schwarzschild-AdS case, a codimension-0 spacetime region that is 
reconstructable from a single leaf appears when $w$ is increased 
from $-1$.  As in the AdS case, this appearance is associated 
with a deviation of entanglement entropies from saturation.  However, 
the manner in which this deviation occurs is qualitatively different 
in the two cases.

To illustrate the salient points, let us consider flat FRW spacetimes 
with a single fluid component $w$ and a spherical cap region $A$ on 
a leaf parameterized by the half opening angle $\psi$.  Below, we 
focus on entanglement entropies $S_w(\psi)$ of the regions with 
$\psi \leq \pi/2$.  Those with $\psi > \pi/2$ are given by the 
relation $S_w(\psi) = S_w(\pi-\psi)$.

As before, we define
\begin{equation}
  Q_w(\psi) = \frac{S_w(\psi)}{S_{\rm max}(\psi)} 
  = \frac{S_w(\psi)}{\norm{A}/4l_{\rm P}^{d-1}}.
\label{eq:Q_w-psi}
\end{equation}
This quantity was calculated in Ref.~\cite{Nomura:2016ikr} in $3+1$ 
dimensions, which we reproduce in Fig.~\ref{fig:Q_w-FRW}.
\begin{figure}[t]
\begin{center}
  \includegraphics[height=6.5cm]{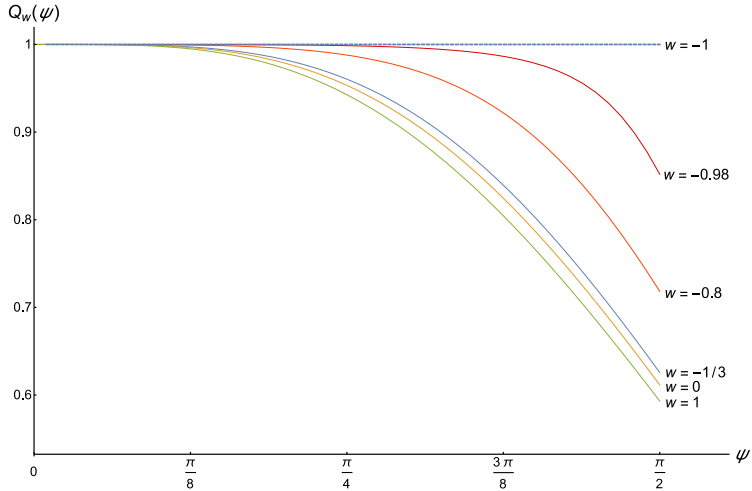}
\end{center}
\caption{The entanglement entropy in the holographic theory of flat 
 FRW spacetimes normalized by the maximal value of entropy in the 
 subregion, $Q_w(\psi) = S_w(\psi)/S_{\rm max}(\psi)$, as a function 
 of the size of the subregion, a half opening angle $\psi$.  As the 
 equation of state parameter $w$ is increased from $-1$, $Q_w(\psi)$ 
 deviates from $1$ in a way different from the Schwarzschild-AdS case.}
\label{fig:Q_w-FRW}
\end{figure}
The basic features are similar in other dimensions. 
In particular, $Q_w(\psi)$ satisfies the properties 
given in Eqs.~(\ref{eq:prop-Q-1},~\ref{eq:prop-Q-2}) in 
Appendix~\ref{subapp:2Dproof}.

We find that the way $Q_w(\psi)$ deviates from $1$ as $w$ is increased 
from $-1$ is qualitatively different from the way the similar quantity 
$Q_A$ deviates from $1$ in the Schwarzschild-AdS case as the temperature 
is reduced from the cutoff scale.  In particular, we find that in 
the FRW case
\begin{itemize}
\item
The deviation from $Q_w(\psi) = 1$ occurs {\it from larger subregions}. 
Namely, as $w$ is raised from $-1$, $Q_w(\psi)$ is reduced from $1$ 
first in the vicinity of $\psi = \pi/2$.
\item
There is no regime in which the entanglement entropy obeys an area law, 
$Q_w(\psi) \sim 1/\psi$, or a volume law with a reduced coefficient, 
$Q_w(\psi) = {\rm const.} < 1$.
\end{itemize}
As we will see next, these have profound implications for the nature of 
the holographic theory of FRW spacetimes.

\subsection{Locality vs nonlocality}
\label{subsec:locality}

In the following discussion, we assume that the dynamics in the 
holographic theory are chaotic and non-integrable as expected in 
a theory of quantum gravity; see, e.g. Ref.~\cite{Maldacena:2015waa}. 
Such systems are expected to satisfy the eigenstate thermalization 
hypothesis (ETH)~\cite{Srednicki:1994,Nandkishore:2014kca}, so generic 
high energy eigenstates reproduce the behavior of a thermal Gibbs 
density matrix.  In addition, we note that the dimension of the 
holographic Hilbert space is large $({\cal A}/4 l_{\rm P}^{d-1} \gg 1)$ 
and finite size effects causing deviations from the thermodynamic 
limit can be ignored.

We have already seen that one way to obtain a maximally entropic state 
is to look at high energy states in a local theory.  In the context of 
AdS/CFT, this corresponds to examining black holes with temperature near 
the cutoff scale.  To deviate from maximal entropy, one can then simply 
lower the energy of the states being considered.  For subregions beyond 
the correlation length, the reduced density matrix is well approximated 
by a Gibbs density matrix, and hence the entropy obeys a volume law 
but with a prefactor dependent on the temperature $T$.  For length 
scales below the correlation length, the von~Neumann entropy is 
dominated by the area law contribution.  Together, these combine 
to give entanglement entropy curves that have the qualitative behavior 
shown in Fig.~\ref{fig:Q_A-AdS}.  Note that in a local theory, lowering 
the temperature shows deviation from thermal behavior originating 
at small length scales.  Namely, the slope of $Q_A$ begins deviating 
from $0$ at small scales.  This entropy deviation at small scales is 
expected to be a general phenomenon of equilibrium states governed 
by a local Hamiltonian.

However, the entanglement entropy curves calculated for 
holographically FRW universes show drastically different behavior; 
see Fig.~\ref{fig:Q_w-FRW}.\footnote{It should be emphasized that we 
are calculating the entanglement entropy of the boundary state on the 
holographic screen, not the entropy associated with any bulk quantum 
fields.  We refer to the degrees of freedom on the screen that govern 
the background gravitation dynamics as the gravitational degrees of 
freedom.  Any low energy bulk excitations (which may include gravitons) 
are higher order corrections to the entanglement entropy and we do 
not discuss them.}
Namely, the deviations from maximal entropy originate at large length 
scales, and the entanglement entropy for small subregions is maximal 
regardless of the fluid parameter $w$.  Additionally, these entropy 
curves are invariant under time translation.  This behavior cannot 
be achieved by a local theory.  One may think that a Lifshitz type 
theory with large $z$ may be able to accommodate such behavior 
due to large momentum coupling, but the leading order contribution 
to the entanglement entropy in $d$ dimensions is believed to 
be proportional to $(L/\epsilon)^{d-1-1/z}$ for weakly coupled 
theories~\cite{MohammadiMozaffar:2017nri}, where $L$ is the characteristic 
length of the entangling region and $\epsilon$ is the cutoff length. 
Thus entanglement entropy is proportional to the volume only in the 
limit that $z \rightarrow \infty$, which would be a nonlocal field 
theory.  Indeed, entanglement entropy being maximal for small subregions 
is observed in a number of nonlocal theories~\cite{Barbon:2008ut,%
Fischler:2013gsa,Karczmarek:2013xxa,Shiba:2013jja,Fu:2016yrv,Liu:2017kfa} 
and is likely a generic phenomenon in such theories.

This leads us to believe that an appropriate holographic description 
of FRW universes would be nonlocal.%
\footnote{It is a logical possibility that a local theory could 
 exhibit volume law entropy behavior due to open dynamics.  Since the 
 size of the leaf is constantly growing, there are degrees of freedom 
 constantly being added to the system, which could already have long 
 range entanglement.  This seems to be an ad~hoc solution, and we will 
 not elaborate on this possibility further.}
This provides us with a few possibilities of theories that have the 
desired qualitative features, all of which have a freedom to tune 
a parameter which corresponds to changing $w$ (and hence the entropy):
\begin{enumerate}
\item[(I)]
a nonlocal theory with a characteristic length scale below the system 
size, changing the nonlocal length scale of the theory or energy of 
the state;
\item[(II)]
a nonlocal theory coupling sites together at all length scales (like 
a long-range interacting spin chain or a variant of the Sachdev-Ye-Kitaev 
model~\cite{Sachdev:1992fk,Kitaev:2015,Maldacena:2016hyu} with all-to-all 
random coupling between a fixed number, $q$, of sites, $\text{SYK}_{q}$), 
changing the energy of the state;
\item[(III)]
a nonlocal theory with a fundamental parameter controlling the coupling 
at all scales in which variations can change the entropy; for example, 
changing the number of sites coupled to each other in each term of the 
Hamiltonian (analogous to changing $q$ in $\text{SYK}_{q}$).
\end{enumerate}

The ground states of theories in case~(I) are explored in 
Refs.~\cite{Barbon:2008ut,Fischler:2013gsa,Karczmarek:2013xxa} in 
string theory frameworks.  This case can also be realized as a spin 
chain with interactions that couple all sites within a distance smaller 
than the characteristic nonlocal length scale.  Above the nonlocal 
length scale an area law term starts to pick up and will eventually 
dominate.  However, because of this eventual turn-on of an area law, 
the qualitative features of the entropy normalized by volume are 
different than those exhibited by FRW entropy curves.  Namely, the 
concavity of the $Q_A$ plot beyond the nonlocal length scale is opposite 
to that observed in the FRW case.  This is because beyond the nonlocal 
length scale the entropy approaches an area law, hence the second 
derivative of $Q_A$ will be positive, unlike that observed in the 
FRW case.  Raising the temperature will only add an overall constant 
asymptotic value to $Q_A$.  Hence, the concavity of $Q_A$ forbids 
the holographic theory of FRW spacetimes from being a theory with 
a characteristic nonlocal length scale smaller than the system size.

This reasoning leaves us with nonlocal theories with characteristic 
interaction lengths comparable to the system size---what does this 
mean?  It simply means that a site can be coupled to any other site. 
For simplicity we will consider SYK-like theories but rather than being 
zero dimensional we split up the degrees of freedom to live on a lattice 
but keep the random couplings between them.  At first thought, one may 
think that because of the random, all-to-all coupling the entanglement 
entropy for all subregions would always be maximal.  However this is not 
the case.  The entanglement entropy for small regions is indeed maximal, 
but then deviates at large length scales~\cite{Fu:2016yrv,Liu:2017kfa}. 
One can intuitively understand this by thinking about the $\text{SYK}_2$ 
model and Bell pairs.  The SYK couplings are random, and some sites 
will have significantly higher coupling than average.  In the ground 
state, these pairs have a high probability of being entangled, so if 
the subregion of interest contains only half of one of these special 
pairs, this will raise the entanglement with the outside.  However, 
once the subregion becomes larger there is a higher probability that 
a complete Bell pair is contained, and this will drop the entanglement 
entropy.

From this intuition, one can see that the ground state of SYK-like 
theories have near maximal entanglement for small regions, which then 
deviates at large length scales.  At higher energies, the probability 
of minimizing the term in the Hamiltonian coupling these special 
sites (and creating the effective Bell pair) will be lowered, and 
hence the entanglement entropy of all subregions will monotonically 
increase~\cite{Liu:2017kfa,Huang:2017nox}.  This behavior is reminiscent 
of that observed in FRW entanglement entropy if we relate the fluid 
parameter, $w$, to the energy of the nonlocal state:\ the case~(II) 
listed above.  The limit of $T \rightarrow \infty$ would then correspond 
to $w \rightarrow -1$.

The third possibility~(III) is similar to the one just discussed, 
but with the difference that $w$ is dual not to temperature but to 
a fundamental parameter dictating the ``connectivity'' of the boundary 
theory.  In the language of $\text{SYK}_q$, this would correspond to 
changing $q$, where $q$ is the number of coupled fermions in each 
interaction term of the Hamiltonian.  As $q$ increases, the ground 
state entanglement monotonically increases and as $q \rightarrow 
\infty$ becomes maximal.  This would be the limit corresponding to 
$w \rightarrow -1$.  However, any possibility like this, which employs 
a change of a fundamental parameter of the Hamiltonian, will require 
us to manufacture the whole Hilbert space of the boundary theory by 
considering the collection of only the low energy states for each value 
of $q$.  We would like one related class of spacetimes to be dual to 
one boundary theory, which is not the case in this option.  We thus 
focus on option~(II) as the best candidate, but we cannot logically 
exclude option~(III).

It is interesting to observe the relationship between where the deviation 
from volume law entropy occurs and where the corresponding spacetime 
emerges.  In the Schwarzschild-AdS case, $Q_A$ drops from $1$ immediately 
at small subregions, and the spacetime that emerges is precisely that 
which is reconstructed from small subregions.  Hence the directly 
reconstructable region appears at the boundary and grows inwards as 
the temperature of the state is lowered.  The converse is true in the 
case of FRW spacetimes.  As we move away from $w = -1$, the entanglement 
entropy drops from maximal at large subregions and the corresponding 
spacetime that emerges is constructed by intersecting large surfaces. 
This is because the HRRT surfaces of small subregions of leaves with 
$w$ near $-1$ all lie on the same codimension-1 surface, the future 
causal boundary of the leaf, analogous to the small surfaces in 
Fig.~\ref{fig:dS_2+1} in Appendix~\ref{subapp:HRRT-dS}.  The HRRT 
surfaces for large subregions deviate from this and hence allow for 
reconstructing a codimension-0 region starting with points deepest in 
the bulk.

The language of quantum error correction~\cite{Almheiri:2014lwa} and 
tensor networks~\cite{Swingle:2009bg,Pastawski:2015qua,Hayden:2016cfa} 
allows for a nice interpretation of this phenomenon.  The loss of 
entanglement in pure gravitational degrees of freedom affords nature 
the opportunity to redundantly encode local bulk degrees of freedom in 
the boundary.  In AdS, short range entanglement is lost first, and hence 
there is ``room'' for the information of local bulk degrees of freedom 
to be stored.  In the case of FRW, long range entanglement is lost first, 
and subsequently points in the bulk that require large subregions to 
reconstruct emerge first.

\section{Holographic Hilbert Spaces}
\label{sec:linearity}

The analysis of the previous sections brings us to a suitable position 
to discuss the structure of holographic Hilbert spaces.  In this section, 
we propose how a single theory can host states with different spacetime 
duals while keeping geometric operators linear in the space of 
microstates for a fixed semiclassical geometry.  We use intuition 
gathered from quantum thermodynamic arguments to guide us.  Similar 
ideas have been discussed in Ref.~\cite{Almheiri:2016blp}.  Here we 
present a slightly generalized argument to emphasize its independence 
of dynamics, and explain its application to our framework.

Let us assume that the entanglement entropy of subregions of a boundary 
state dual to a semiclassical geometry is calculated via the HRRT 
prescription.  Given a bulk spacetime, one can then find the corresponding 
entanglement entropies for all subregions of the boundary.  Note that 
here we consider the ``classical limit.''  Namely, all the subregions 
we consider contain $O({\cal N})$ degrees of freedom, where
\begin{equation}
  {\cal N} = \frac{{\cal A}}{4 l_{\rm P}^{d-1}},
\label{eq:cal-N}
\end{equation}
with ${\cal A}$ being the volume of the holographic space.  The collection 
of all boundary subregions and their corresponding entanglement entropies 
will be referred to as the entanglement structure of the state, which 
we denote by $S(\ket{\psi})$.

From here, it is natural to ask whether or not all states with the same 
entanglement structure are dual to the same bulk spacetime.  This might 
indeed be the case, but it leads to some undesirable features.  These 
primarily stem from the fact that given a particular entanglement 
structure, one can find a basis for the Hilbert space in which all 
basis states have the specified entanglement structure.  For a Hilbert 
space with a local product structure, one can do this by applying local 
unitaries to a state---these will retain the entanglement structure 
and yet generate orthogonal states.  This would imply that by generically 
superposing $e^{O({\cal N})}$ of these states, one could drastically 
alter the entanglement structure and create a state dual to a completely 
different spacetime.  Hence, geometric quantities could not be 
represented by linear operators, even in an approximate sense. 
If this were the case, a strong form of state dependence would be 
necessary to make sense of dynamics in the gravitational degrees 
of freedom~\cite{Nomura:2017npr}. 

However, it is not required that every state in the holographic Hilbert 
space with the same entanglement structure is dual to the same spacetime. 
How can this consistently happen?  Given an entanglement structure, 
$S(\ket{\phi})$, we expect the existence of a subspace in which generic 
states (within this subspace) have this same entanglement structure 
up to $O({\cal N}^p)$ corrections with $p < 1$.  The existence of 
a subspace with a unique entanglement structure is not surprising 
if the dimension of the subspace is $e^{O({\cal N}^p)}$ with ($p < 1$), 
since we generally expect
\begin{equation}
  S\biggl( \sum_{i=1}^{e^{M}} c_i \ket{\psi_i} \biggr) 
  = S(\ket{\psi}) + O(M),
\label{eq:EE_triv}
\end{equation}
where $S(\ket{\psi_i}) = S(\ket{\psi})$ for all $i$. 

However, we argue further that there exist such subspaces with 
dimension $e^{O({\cal N})}$, spanned by some basis states $\ket{\psi_i}$ 
($i = 1, \cdots, e^{Q {\cal N}}$), with
\begin{equation}
  S\biggl( \sum_{i=1}^{e^{Q {\cal N}}} c_i \ket{\psi_i} \biggr) 
  = S(\ket{\psi}) + O({\cal N}^p;\, p<1),
\label{eq:EE_nontriv}
\end{equation}
where $Q \leq 1$ does not scale with ${\cal N}$.  The existence of 
these subspaces with entanglement structures invariant under superpositions 
is expected from canonical typicality (also referred to as the general 
canonical principle)~\cite{Goldstein:2005aib,GCP}.  This provides us 
with the powerful result that generic states in subspaces have the same 
reduced density matrix for small subsystems (up to small corrections). 
The proof of this statement is purely kinematical and hence applies 
generally.  In fact, from canonical typicality the correction 
term in Eq.~(\ref{eq:EE_nontriv}) is exponentially small, 
$O(e^{-Q {\cal N}/2})$.

Canonical typicality is a highly nontrivial statement because the size 
of the subspaces in question is large enough that one would naively 
think that superpositions would ruin the entanglement structure at 
$O({\cal N})$.%
\footnote{Note that if one fine-tunes coefficients and selects states 
 in this subspace carefully, one could construct a state with lower 
 entanglement via superposition.}
Therefore, even if one considers an exponentially large superposition 
of microstates (so long as they are generic states from the same 
subspace), geometric operators can be effectively linear within this 
subspace.  We propose that states dual to semiclassical geometries 
are precisely generic states within their respective subspaces.

An example of one of these subspaces would be an energy band of an 
SYK theory.  These harbor an exponentially large number of states, 
and yet from canonical typicality any superposition of generic states 
within this band will have the same entanglement entropy.  Another 
example would be states that have energy scaling with the central 
charge, $c$, in AdS/CFT.  These are dual to large black holes and 
there are also an exponentially large number of states within the 
energy band.  Despite this, generic states within this energy band 
will have the same entanglement entropy structure.  Essentially, 
canonical typicality proves the existence of exponentially large 
subspaces that have entanglement structures preserved under 
superpositions of just as many states.

We need this strengthened statement because the entanglement entropy 
calculations for FRW suggest the size of subspaces dual to identical 
spacetimes are exponentially large.  This is because the quantity 
$Q$ in Eq.~(\ref{eq:EE_nontriv}) is related with von~Neumann 
entropies characterizing the whole state, e.g.\ $Q_A$ in 
Section~\ref{subsec:lower-T} with $A$ being the half boundary 
space and $Q_w(\pi/2)$ in Section~\ref{subsec:FRW-Q}.  This intuition 
stems from the statement that the thermal entropy density and 
entanglement entropy density for states in the thermodynamic limit 
are approximately equal.  For generic states within some energy 
interval subspace, this holds by canonical typicality.  The statement 
also results from assuming the system satisfies the ETH (like in 
AdS/CFT).  SYK models, however, do not strictly satisfy the ETH; 
nevertheless, it remains true that $Q_A$ at half system size gives 
a good approximation for the thermal entropy density, and the discrepancy 
vanishes as the energy of the states is increased.  For these reasons, 
we expect $Q_w(\pi/2)$ to well approximate the thermal entropy density 
of states dual to an FRW spacetime with fluid parameter $w$.

We can now address the properties of typical states within an entire 
Hilbert space.  Consider a holographic Hilbert space of a given theory, 
e.g.\ a CFT with a finite cutoff or the holographic theory of FRW 
spacetimes.  If there are multiple superselection sectors in a given 
theory, then we focus on one of them.  In such a Hilbert space, the 
effective subspace with $Q = 1$ corresponds to typical states.  Applying 
Page's analysis, we can then conclude that the only entanglement structure 
consistent with Eq.~(\ref{eq:EE_nontriv}) where $Q = 1$ must be that 
of maximal entropy.  For example, the number of microstates for a large 
black hole approaches the dimension of the boundary Hilbert space as 
$T \rightarrow \Lambda$, and these states are maximally entangled. 
Similarly, using the argument in the previous paragraph, the number 
of independent microstates in the de~Sitter limit approaches the 
dimension of the boundary Hilbert space, and these states are maximally 
entangled.  As shown in Section~\ref{sec:typical}, the directly 
reconstructable spacetime region vanishes in these cases---an effective 
subspace with $Q = 1$ does not have reconstructable spacetime.  It is 
in this sense that typical states in the {\it whole} Hilbert space have 
no reconstructable spacetime.

On the other hand, if $Q < 1$, the corresponding entanglement structure 
$S(\ket{\psi})$ can be non-maximal, and generic states in this 
subspace may be dual to some bulk spacetime.  As discussed in 
Section~\ref{subsec:locality}, we expect that dynamics of the boundary 
theory can naturally select these subspaces, for example by simply 
lowering the energy of the system in the case of the boundary CFT.

The structure discussed here allows for a single holographic Hilbert 
space to harbor effective subspaces dual to different geometries, 
allows for a ``generically linear'' spacetime operator, and hence 
eliminates the need for any strong form of state dependence.%
\footnote{By strong state dependence, we mean a theory that would 
 require state dependence to describe bulk excitations in the directly 
 reconstructable region of a boundary state which is a generic 
 superposition of states dual to a given spacetime.  For a more 
 detailed analysis of this statement, we refer the reader to 
 Ref.~\cite{Nomura:2017npr}.  The main result is that requiring 
 linearity for the multiple boundary representations of a bulk operator 
 is impossible if the number of geometry microstates is $e^{\cal N}$. 
 This prohibits the existence of a directly reconstructable region 
 for typical states.  Note that the directly reconstructable region 
 does not probe behind black hole horizons, and hence we are not 
 addressing the possibility that state dependence is necessary to 
 recover the black hole interior.}
Because this ``spacetime operator'' is identical for states of a given 
entanglement, it will obviously act linearly on generic superpositions 
of states within one of these dynamically selected, entanglement-invariant 
subspaces.  We suspect that it is only in this thermodynamic sense that 
classical spacetime emerges from the fundamental theory of quantum gravity.

\section{Conclusion}
\label{sec:conclusion}

\subsection{Discussion}
\label{subsec:discussion}

Our understanding of the relationship between spacetime and entanglement 
seems to be converging.  The necessity of entanglement between boundary 
degrees of freedom for the existence of spacetime has been known for some 
time, but this fact may have mistakenly established the intuition that 
the fabric of spacetime itself is purely this entanglement.  However, 
this cannot be the case.  A one-to-one mapping between the entanglement 
structure of a boundary state and the directly reconstructable bulk 
spacetime cannot be upheld in a state independent manner.  In addition, 
we see that as boundary entanglement approaches maximality the 
reconstructable region of the bulk vanishes.

In hindsight, this should not be too surprising.  Let us recall 
Van~Raamsdonk's discussion~\cite{VanRaamsdonk:2009ar} relating spacetime 
to entanglement by examining the link between mutual information and 
correlations in a system.  The mutual information between two boundary 
subsystems $A$ and $B$ is defined as
\begin{equation}
  I(A,B) = S(A) + S(B) - S(A \cup B).
\label{eq:mutual_info}
\end{equation}
This quantity bounds the correlations in a system between operators 
${\cal O}_A$ and ${\cal O}_B$, supported solely on $A$ and $B$ via 
the relation
\begin{equation}
  I(A,B) \ge \frac{(\langle {\cal O}_A {\cal O}_B \rangle 
    - \langle {\cal O}_A \rangle \langle {\cal O}_B \rangle)^2} 
    {2 |{\cal O}_A|^2 |{\cal O}_B|^2}.
\label{eq:mutual_info_bound}
\end{equation}
Hence, when the mutual information between two subregions $A$ and $B$ 
vanishes, the correlation between local operators supported within the 
subregions must also vanish.  Assuming that subregion duality holds, 
this implies that correlation functions of bulk fields vanish.  Generally, 
correlators between two bulk fields go as
\begin{equation}
  \langle {\cal O}_1(x_1) {\cal O}_2(x_2) \rangle \sim a f(L),
\label{eq:corr_geodesic}
\end{equation}
where $L$ is the distance of the shortest geodesic connecting $x_1$ 
and $x_2$, $a$ is some theory dependent constant, and $f(z)$ is 
a decreasing function of $z$.  One can then make the argument that 
decreasing entanglement between regions will drop the mutual information 
between the regions, and hence make $L$ effectively infinite.  This 
implies that the spacetime regions dual to subregions $A$ and $B$ 
are disconnected when the entanglement (and hence mutual information) 
vanishes.  For intuition's sake, one can imagine two subregions of the 
AdS boundary which are in a connected entanglement phase---increasing 
the distance between these two subregions will drop the mutual 
information.  This is an argument demonstrating the need for entanglement 
in a holographic theory dual to spacetime, so long as the holographic 
theory has subregion duality.

However, there is a different (quite the opposite) way to make the mutual 
information between small (less than half of the system) subregions 
vanish, and consequently kill the bulk correlations.  This is by 
considering maximally entropic boundary states---in these, the mutual 
information will vanish for any pair of subregions.  This is the case 
both in cutoff temperature AdS black holes and in the de~Sitter limit 
of the holographic theory of FRW universes.  In these, the boundary 
states are maximally entropic and hence the bulk correlators must 
vanish; however, there exist finite length geodesics in the bulk 
(even if restricted only to the directly reconstructable region) 
which connect all points on the boundary.  This means that the prefactor, 
$a$, of Eq.~(\ref{eq:corr_geodesic}) must vanish, making the bulk 
theory ultralocal.  In these cases, the maximal entropy implies that 
there cannot be an extra emergent bulk dimension.  This is because 
the ground state of any quantum field theory quantized on spacelike 
hypersurfaces must be entangled at arbitrarily short scales, which 
is violated by the assumption that $a = 0$.  However, this is not 
necessarily unexpected---in both de~Sitter space and cutoff temperature 
black holes, the directly reconstructable regions are codimension-1 
null surfaces of the bulk (the de~Sitter horizon and black hole 
horizon respectively).  A natural description of the fields on this 
surfaces would be through null quantization, which is known to 
be ultralocal~\cite{Wall:2011hj}.  Accordingly, we see a breakdown 
in the holographic description.

From the above arguments one can convince themselves that it is not 
entanglement itself which allows for the construction of spacetime, 
but rather something related to intermediate entanglement.

How can this be better understood?  The framework of tensor networks 
provides some intuition behind this.  Here, a maximally entropic 
boundary state is most naturally represented by a single bulk node 
with one bulk leg and multiple boundary legs.%
\footnote{Any attempt to create a bulk by artificially including more 
 nodes with extremely large bulk bond dimension can be reduced to the 
 case of one bulk node.}
Hence the ``spacetime'' is just one non-localizable bulk region, a 
``clump'' as defined in Ref.~\cite{Sanches:2017xhn}.  This bulk point 
can be reconstructed once a subregion of the boundary contains more 
than half of the boundary legs.  Here it is clear that a maximally 
entropic boundary state has no dual ``spacetime,'' and yet it is 
possible to encode a bulk code subspace with full recovery once more 
than half of the boundary is obtained.  Note that these typical states 
will all satisfy (in fact saturate) the holographic entropy cone 
inequalities~\cite{Bao:2015bfa} simply because a random tensor network 
accurately describes the state, but this does not mean that there is 
a reconstructable region of the spacetime.

Additionally, if maximally entropic states did have reconstructable 
spacetime, then state dependence would be necessary in order to describe 
bulk excitations in these states, under the assumption that subregion 
duality holds.  This is because the number of microstates with maximal 
entropy is approximately the dimension of the full boundary Hilbert 
space, and by the argument in Section~V.C. of Ref.~\cite{Nomura:2017npr}, 
it is impossible to find a boundary representation of a bulk operator 
that has support only on a subregion of the boundary {\it and} acts 
approximately linearly on all microstates of a given spacetime. 
Intuitively, this is because the operator will be over-constrained 
by insisting it both have support on a subregion of the boundary and 
act linearly on $D$ microstates, when the dimension of the full boundary 
space is $D$.  This means that if we require state independence, then 
the only possible boundary operators representing bulk excitations 
for a maximally entropic state must have support on the full boundary 
space.%
\footnote{This is not contradicting the statement in the previous 
 paragraph that the sole bulk node's state in a random tensor can be 
 recovered with just more than half of the boundary.  In that case, 
 only the recovery of the bulk code subspace for one microstate was 
 considered.  State independence would require us to have an operator 
 that acts linearly on {\it all} microstates of a given spacetime.}
Therefore, the minimum possible subregion in which bulk excitations 
can be encoded state independently is the whole boundary space; hence 
there is no directly reconstructable spacetime.  This directly highlights 
the tension between reconstructing spacetime for maximally entropic 
states (in any manner), and requiring both subregion duality and state 
independence.

But what happens if we lower the entanglement of the boundary state 
while keeping the dimension of the boundary Hilbert space constant? 
Again, we turn to tensor networks for intuition.  In these situations, 
a natural way to encode sub-maximal entanglement (while fixing the 
bulk leg dimension) is by including more bulk nodes.  Therefore, by 
reducing the boundary entanglement, it is possible to create a bulk 
code subspace in which subsystem recovery is possible.  It seems that 
quantum gravity naturally utilizes this sub-maximal entanglement in 
order to encode information via subregion duality.  This suggests that 
perhaps entanglement is not the fundamental constituent of spacetime 
per~se, but rather the avenue by which subregion duality manifests.

\subsection{Future directions}
\label{subsec:future-dir}

This paper has attempted to clarify the nature of spacetime in 
holographic theories and it naturally raises interesting questions 
to be investigated in future work.

\subsubsection*{Reconstructability and generalized holographic 
 renormalization}

The analysis of this paper utilized the condition for reconstructable 
spacetime presented in Ref.~\cite{Sanches:2017xhn}, but 
appropriately generalized for use in the context of holographic 
screens~\cite{Nomura:2017npr}.  This paper illuminated some highly 
desirable properties of the directly reconstructable region defined 
in this manner---namely that one can describe this region state 
independently.  It would be extremely beneficial to attempt to find 
an explicit way to construct bulk operators using this method, 
perhaps uniting it with the methods of entanglement wedge 
reconstruction~\cite{Faulkner:2017vdd,Cotler:2017erl}.

It would also be interesting to try and develop new tools for 
reconstructing the bulk.  The relationship between the depth in the 
bulk and the scale in the boundary theory in AdS/CFT suggests that 
it may be possible to define the reconstructable region of spacetime 
as that which is swept through a renormalization procedure.  How this 
manifests in general holography is not clear, but it is suggestive 
that there exists at least one foliation where one can ``pull'' the 
leaf inward while retaining the ability to consistently apply the 
HRRT prescription.  Because the area of these renormalized leaves 
are monotonically decreasing, it is natural that this ``pulling'' 
may correspond to some renormalization procedure. The decrease in 
area also happens locally, which can be seen by generalizing the 
spacelike monotonicity theorem of Ref.~\cite{Nomura:2016ikr}.

One guess as to how to construct the renormalized leaf is to first 
pick the coarse graining scale of the boundary, and then define the 
new leaf as the collection of all of the deepest points of the extremal 
surfaces anchored to subregions with the size of the coarse graining 
scale.  In AdS/CFT this will pull the boundary in along the $z$ 
direction as expected, while in FRW spacetimes this will pull the 
leaf along the null direction if the coarse graining scale is small. 
Using this method, one can renormalize to a given scale in a number 
of different ways.  For example, one could perform many small 
renormalization steps or one large one.  The renormalized leaves 
in the two cases will generically differ, and this may correspond 
to the difference between one-shot renormalization and a renormalization 
group method.  The collection of all renormalized leaves may then 
determine the reconstructable region.%
\footnote{Using this construction, it is not possible to extend 
 reconstruction beyond horizons, but it is possible to reach behind 
 entanglement shadows.}
Theorem~\ref{th:1} tells us that once the renormalized state becomes 
maximally entropic, the renormalization procedure must halt.  Furthermore, 
because extremal surfaces for non-maximally entropic states probe the 
bulk, this renormalization procedure will continue until the leaf becomes 
maximally entangled.  Thus, this renormalization group flow will halt 
only once a bifurcation surface or a null non-expanding surface is 
reached.  In this language, maximally entropic states correspond to 
fixed points.  This is speculation, but may shed some light on the 
nature of renormalization in general holographic theories.

\subsubsection*{Cosmic equilibration}

In Section~\ref{subsec:proof}, we proved that maximally entropic 
states have no directly reconstructable spacetime.  Additionally, 
we argued that if one desires a state on a holographic screen to be 
maximally entropic and evolve in time, then the holographic screen 
is a null non-expanding surface and the directly reconstructable 
region is no more than the screen itself.  This suggests that in 
a holographic theory of cosmological spacetimes, if a state becomes 
maximally entropic and the screen does not halt, then the holographic 
description approaches that of de~Sitter space.  Consequently, the 
area of the screen is constant.  It would be interesting to investigate 
the result from the other direction.  By first assuming that the 
screen approaches a constant area, one may be able to argue that 
the leaves would then approach maximal entropy, and hence the holographic 
description approaches that of de~Sitter space.  This could provide 
another way to consider equilibrating to de~Sitter type solutions; 
see Ref.~\cite{Carroll:2017kjo}.

\subsubsection*{Complementarity}

In Appendix~\ref{app:two-sided}, we highlighted the dependence of the 
reconstructable region on the frame of reference.  In the case of the 
two-sided AdS black hole, we considered different reference frames 
corresponding to different time slicings in the same boundary 
theory---as one shifts the difference in the two boundary times, 
one recovers more and more of the black hole interior.  This is 
an example of complementarity.  It would be interesting to pursue 
this idea further and investigate the directly reconstructable region 
of a two-sided black hole.

One intriguing aspect of the two-sided black hole is that the directly 
reconstructable region does not extend beyond the extremal surface 
barrier; this is a {\it macroscopic} distance away from the future 
singularity, regardless of the boundary frame.  Does this mean that 
the boundary CFT cannot describe semiclassical physics behind this 
barrier, even where curvature is small?  Perhaps this means that there 
is a different description for the interior, living on a different 
holographic space.

\subsubsection*{Fundamentality of subregion duality}

In many of the discussions throughout this paper, we either 
required subregion duality or saw that it naturally arose from 
other considerations.  This seems to suggest that subregion duality 
is a fundamental characteristic of general holography.  Investigating 
the manner in which subregion duality arises in AdS/CFT may shed 
light on holography in general spacetimes.

\subsubsection*{Holographic theory of flat FRW spacetimes}

One of the most obvious open problems is that of finding an effective 
holographic theory applicable beyond asymptotically AdS spacetimes. 
In this paper and throughout previous work, we have focused on the 
case of flat FRW universes and assumed that a theory exists on the 
holographic screen in which the generalized HRRT prescription holds. 
Investigations into this has led to a deeper understanding of the 
nature of entanglement in constructing spacetime, along with (the 
lack of) state dependence in holographic theories.

It seems that a consistent theory is possible, and the most promising 
candidate for a theory describing the gravitational degrees of freedom 
is a theory with long-range interactions in which the energy of the 
states are dual to the fluid parameter of the FRW universe.  We know 
that it cannot be entirely nonlocal because this would prohibit the 
existence of entanglement phase transitions.  A theory with long range 
interactions would accurately reproduce the entanglement entropy structure 
we observe for FRW universes and would allow for a universal theory 
describing the single class of spacetimes.  Beyond this, we have some 
additional data about the properties of the boundary theory.

We know that a code subspace of states manifests, and these states are 
dual to bulk excitations.  Assuming subregion duality holds, one can 
ask the question of whether or not nonlocality/very long-range interactions 
in the gravitational degrees of freedom prohibits the local propagation 
of bulk excitations in the boundary theory.  We expect that the operators 
dual to bulk excitations are weakly coupled to the gravitational degrees 
of freedom, and that a local description of these bulk operators exists 
in the boundary.  In fact, this is what happens when one renormalizes the 
AdS boundary down to a single AdS volume~\cite{Balasubramanian:1999jd}. 
This renormalization induces an infinite set of interactions which makes 
the resulting theory on the renormalized boundary nonlocal.  Despite 
this, the renormalized theory still describes bulk physics through 
subregion duality.  Hence, the nonlocality of the boundary theory 
does not seem to be a fundamental obstacle in describing low energy 
excitations using local dynamics in the boundary theory.%
\footnote{It would be interesting to study this effective boundary 
 theory, induced in AdS/CFT by renormalizing all the way down to the 
 AdS scale.  The holographic theory capturing sub-AdS locality could 
 be very closely related to the theory on holographic screens.}
The dynamics of boundary operators dual to bulk excitations in flat 
FRW spacetimes was studied in Ref.~\cite{Nomura:2017grg} and it was 
determined that regardless of dimension and fluid parameter, the 
spread of these operators was characteristic of a theory with $z=4$ 
Lifshitz scaling.  This provides extra constraints for finding 
a candidate theory.

\subsubsection*{Holographic theory for general spacetimes}

It might appear that defining quantum gravity using holography, as 
envisioned here, is background dependent.  Namely, the holographic 
theory is given for each class of background spacetimes, e.g.\ 
asymptotically AdS spacetimes and flat FRW spacetimes.  This situation 
is analogous to defining string theory on the worldsheet, which 
is defined separately on each target space background.  From the 
perspective of the worldsheet, different backgrounds correspond 
to different theories living on the two dimensional spacetime. 
Nevertheless, we believe there exists some unified framework 
encompassing all these possibilities.  Similarly, in the case 
of holographic theories, it is plausible that the resultant theories 
for different background spacetimes correspond to different sectors 
described within a single framework.

\section*{Acknowledgments}

We thank Chris Akers, Francisco Machado, Sanjay Moudgalya, and Fabio 
Sanches for useful discussions.  This work was supported in part 
by the National Science Foundation under grant PHY-1521446, by the 
Department of Energy, Office of Science, Office of High Energy Physics 
under contract No.\ DE-AC02-05CH11231, and by MEXT KAKENHI Grant 
Number 15H05895.

\appendix

\section{Reconstructability of Two-sided Black Holes and Complementarity}
\label{app:two-sided}

In the main part of the text, we have focused on spacetimes having a 
simply connected boundary.  It is interesting to consider when this 
is not the case and examine which (if any) results persist.  For 
definiteness, we here analyze the case of a two-sided eternal black 
hole in asymptotically AdS space.  In this case, the holographic screen 
is the union of the two asymptotic boundaries at spacelike infinity. 
The boundary theory comprises two CFTs, ${\rm CFT}_L$ and ${\rm CFT}_R$, 
which are decoupled from each other.  Hence, the Hamiltonian for the 
system is given by
\begin{equation}
  H_{\rm total} = H_L + H_R.
\end{equation}
The times $t_L$ and $t_R$ associated respectively with $H_L$ and $H_R$ 
run in opposite directions along the two asymptotic boundaries.

Since the theories are decoupled, it might appear that one could 
evolve each of the theories independently---effectively foliating 
the holographic screen by two independent parameters, $(t_L, t_R)$. 
Per the construction outlined in Section~\ref{subsec:reconst}, 
the directly reconstructable region would then be the union of all 
points localized by intersecting entanglement wedges of HRRT surfaces 
individually anchored to ``one'' leaf, each of which is labeled by 
$(t_L, t_R)$.  Here, ``one'' leaf corresponds to picking a connected, 
equal time slice of the left boundary and independently a connected, 
equal time slice of the right boundary.  If this were the case, the 
reconstructable region would be most of the spacetime, including a 
macroscopic portion of the interior (aside from a region near the 
singularity with $r < r_+/2^{1/d}$, where $r_+$ is the horizon 
radius)~\cite{Susskind:2014moa}.

However, a theory described by Hamiltonian dynamics should have a single 
time parameter.  To make the holographic theory compatible with this, 
we postulate that there is a single parameter $t$ that foliates the 
multiple disconnected components of the holographic screen.  From this 
assumption, there are multiple suitable foliations, and among them 
we must pick one---this corresponds to choosing a reference frame, 
a gauge for the holographic redundancy~\cite{Nomura:2011rb}.  In the 
case of a two-sided black hole, this gives us a one parameter family 
of foliations corresponding to the freedom in choosing the relative 
time shift between $t_L$ and $t_R$ in the CFTs, even after choosing 
a natural foliation at each boundary.

In general, each of these individual foliations reconstruct a different 
region of the bulk spacetime.  For example, adopting the usual thermofield 
double state construction~\cite{Maldacena:2001kr} corresponds to choosing 
a reference frame
\begin{equation}
  t_L = t_R = t,
\label{eq:thermo}
\end{equation}
in which the $t = 0$ slice in the bulk is the one passing through the 
bifurcation surface.  Since time translation is a Killing symmetry in 
this spacetime, and the bifurcation surface is invariant under this 
translation, the HRRT surfaces for any time $t$ never enter the interior 
of the black hole.  Connected HRRT surfaces always pass through the 
bifurcation surface in such a situation (unless the subregion has support 
on only one of the boundaries, in which case the HRRT surface stays in 
one side of the black hole).  The reconstructable region in this reference 
frame, therefore, does {\it not} include the interior of the black hole.

However, one could alternatively consider a reference frame in which 
there is a relative shift in the two times
\begin{equation}
  t_L = t + \Delta,
\quad
  t_R = t.
\label{eq:non-thermo}
\end{equation}
In this case, the connected HRRT surfaces would not necessarily pass 
through the bifurcation surface and could probe regions of the interior, 
and hence parts of the interior will be reconstructable.  We can interpret 
this foliation dependence of the reconstructable region as a version of 
complementarity~\cite{Susskind:1993if}.  In this light, the canonical 
thermofield double time foliation corresponds to an entirely exterior 
description of the black hole, while increasing $\Delta$ allows for 
more of the region behind the horizon to be reconstructed.  An important 
point is that we should not consider leaves with different $\Delta$'s 
in a single description---they correspond to different descriptions 
in different reference frames.  We also note that regardless of the 
foliation, we cannot reconstruct near the singularity because of the 
extremal surface barrier located at $r = r_+/2^{1/d}$.  This suggests 
that in order to probe physics of the singularity we must use a 
different method.

With this interpretation of bulk reconstruction, we would like to 
examine whether or not spacetime ``disappears'' as we approach maximal 
entropy.  A priori, it seems that a macroscopic spacetime region would 
remain as we increase the black hole radius because some portion of 
the interior is reconstructable.  However, this apparent contradiction 
is resolved by considering a finite coordinate time interval and 
examining the reconstructable volume as one increases the temperature.

Consider any foliation where the relative time shift between $t_L$ and 
$t_R$ has been fixed.  In order to carry out the analysis analogous 
to Section~\ref{subsec:AdS-BH}, we fix an interval of coordinate time 
$\varDelta t$ and fix the cutoff surface at $r = R$.  Increasing the 
temperature of the black hole moves the horizon closer and closer to 
the cutoff surface, which can be represented in the Penrose diagram 
as in Fig.~\ref{fig:2-sided}.
\begin{figure}[t]
\begin{center}
  \includegraphics[height=6.5cm]{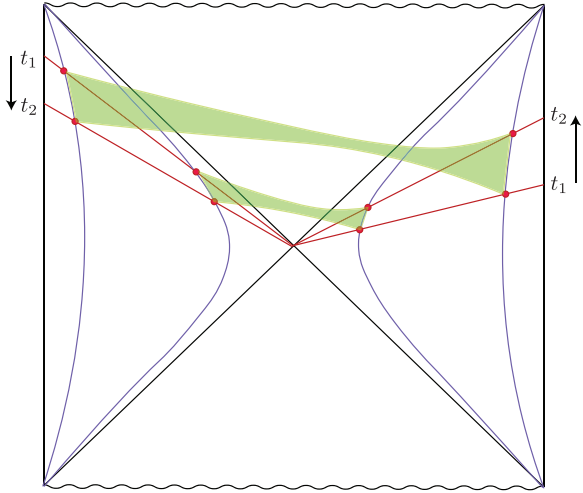}
\end{center}
\caption{The spacetime regions reconstructable using connected HRRT 
 surfaces anchored to subregions with support on both asymptotic 
 boundaries within the range $t \in [t_1, t_2]$ are depicted (green 
 shaded regions) for two different values of black hole horizon radius 
 $r_+$ in a two-sided eternal AdS black hole.  The holographic screen 
 (blue) in both cases is the cutoff surface $r = R$.  Here, we superimpose 
 the respective Penrose diagrams in the two cases to compare the amount 
 of reconstructable spacetime volume available by allowing connected 
 HRRT surfaces.}
\label{fig:2-sided}
\end{figure}
The allowed range of times is depicted by the constant time surfaces 
$t_1$ and $t_2$.  As we take the limit $r_+ \rightarrow R$, which 
corresponds to taking the temperature of the black hole $T_{\rm H} 
\rightarrow \Lambda$ where $\Lambda$ is the cutoff in the boundary 
theory, the finite range of time collapses down to the bifurcation 
surface on both sides.  Thus, the relative reconstructable spacetime 
volume shrinks to zero.

We find that our claim persists despite the addition of a disconnected 
boundary region that allows for the reconstruction of spacetime behind 
a black hole horizon.

\section{Calculations for the Schwarzschild-AdS Spacetime}
\label{app:S-AdS}

In this appendix, we provide explicit calculations of the spatial 
volume and HRRT surfaces of the Schwarzschild-AdS spacetime.

\subsection{Reconstructable volume}
\label{subapp:S-AdS_volume}

The Schwarzschild-AdS spacetime in $d+1$ dimensions is described by 
the metric
\begin{equation}
  ds^2 = - \biggl( \frac{r^2}{l^2} + 1 - \frac{2\mu}{r^{d-2}} \biggr) dt^2 
    + \frac{dr^2}{\frac{r^2}{l^2} + 1 - \frac{2\mu}{r^{d-2}}} 
    + r^2 d\Omega_{d-1}^2,
\label{eq:AdS-BH}
\end{equation}
where $l$ is the AdS radius, and $\mu$ is related with the black hole 
horizon radius $r_+$ as
\begin{equation}
  2\mu = \frac{r_+^d}{l^2} \biggl( 1 + \frac{l^2}{r_+^2} \biggr).
\label{eq:mu}
\end{equation}
The Hawking temperature of the black hole is given by
\begin{equation}
  T_{\rm H} = \frac{d r_+^2 + (d-2)l^2}{4\pi r_+ l^2}.
\label{eq:T_H}
\end{equation}

Consider a large AdS black hole $r_+ \gg l$.  In this limit,
\begin{equation}
  2\mu = \frac{r_+^d}{l^2},
\qquad
  T_{\rm H} = \frac{d r_+}{4\pi l^2},
\label{eq:large-BH}
\end{equation}
and the metric is well approximated by
\begin{equation}
  ds^2 = - \biggl( \frac{r^2}{l^2} - \frac{r_+^d}{l^2 r^{d-2}} \biggr) dt^2 
    + \frac{dr^2}{\frac{r^2}{l^2} - \frac{r_+^d}{l^2 r^{d-2}}} 
    + r^2 d\Omega_{d-1}^2.
\label{eq:metric-large}
\end{equation}
Let us now introduce an infrared cutoff $r \leq R$ and consider the 
spatial volume between the black hole horizon and the cutoff
\begin{align}
  V(r_+,R) &= A_{d-1} \int_{r_+}^R\! 
    \frac{r^{d-1}}{\sqrt{\frac{r^2}{l^2} - \frac{r_+^d}{l^2 r^{d-2}}}}\, dr 
\nonumber\\
  &= \frac{2\pi^{d/2}}{\Gamma(d/2)}\, l\, r_+^{d-1}\! 
    \int_1^{\frac{R}{r_+}}\! \frac{x^{d-2}}{\sqrt{1-\frac{1}{x^d}}}\, dx,
\label{eq:volume}
\end{align}
where $A_{d-1} = 2\pi^{d/2}/\Gamma(d/2)$ is the area of the 
$(d-1)$-dimensional unit sphere.  Here, we have focused on the 
spatial volume because the system is static.

We normalize this volume by the volume of the region $r \leq R$ in 
empty AdS space
\begin{align}
  V(R) &= A_{d-1} \int_0^R\! 
    \frac{r^{d-1}}{\sqrt{\frac{r^2}{l^2} + 1}}\, dr 
\nonumber\\
  &= \frac{2\pi^{d/2}}{(d-1)\Gamma(d/2)}\, l\, R^{d-1},
\label{eq:volume-empty}
\end{align}
where we have used $R \gg l$ in the second line.  This gives us the 
quantity quoted in Eq.~(\ref{eq:vol-ratio}):
\begin{equation}
  f\Bigl(\frac{r_+}{R}\Bigr) \equiv \frac{V(r_+,R)}{V(R)} 
  = (d-1) \frac{r_+^{d-1}}{R^{d-1}}\! 
    \int_1^{\frac{R}{r_+}}\! \frac{x^{d-2}}{\sqrt{1-\frac{1}{x^d}}}\, dx.
\label{eq:vol-ratio-app}
\end{equation}

\subsection{HRRT surfaces}
\label{subapp:HRRT_S-AdS}

Consider a large black hole in asymptotically AdS space.  The holographic 
theory is then a CFT.  Suppose the temperature of the system $T$ is 
lower than the cutoff scale, $T < \Lambda$.  Here we study the behavior 
of the von~Neumann entropy of a spherical cap region $A$ on $r = R$ 
in this setup.

The region is specified by a half opening angle $\psi$
\begin{equation}
  0 \leq \theta \leq \psi,
\label{eq:psi-def}
\end{equation}
where $\theta$ is a polar angle parameterizing $S^{d-1}$ with constant 
$t$ and $r$.  The HRRT surface $\gamma_A$ is then given by function 
$r(\theta)$, which is determined by minimizing the area functional:
\begin{equation}
  \norm{\gamma_A} = \underset{r(\theta)}{\rm min} \left[ A_{d-2} 
    \int_0^\psi\! r^{d-2}\, \sin^{d-2}\!\theta\, 
    \sqrt{ r^2 + \frac{(\frac{dr}{d\theta})^2}
      {\frac{r^2}{l^2} + 1 - \frac{2\mu}{r^{d-2}}} }\, d\theta \right],
\label{eq:norm-gamma_A}
\end{equation}
with the boundary condition
\begin{equation}
  r(\psi) = R,
\label{eq:bc-r}
\end{equation}
where $A_{d-2}$ is the area of the $(d-2)$-dimensional unit sphere, 
and $\mu$ is given by Eq.~(\ref{eq:AdS-BH}).  Here and below, we assume 
$\psi \leq \pi/2$.  For $\psi > \pi/2$, the entropy of $A$ is determined 
by $S(\psi) = S(\pi-\psi)$.

The surface $\gamma_A$ is well approximated to consist of two components:\ 
(i) a ``cylindrical'' piece with $\theta = \psi$, which is perpendicular 
to the cutoff surface $r = R$ and extends down to $r = r_0$ ($< R)$ and 
(ii) the ``bottom lid'' with $r = r_0$ and $0 \leq \theta \leq \psi$; 
see Fig.~\ref{fig:S-AdS_HRRT}.
\begin{figure}[t]
\begin{center}
  \includegraphics[height=6.5cm]{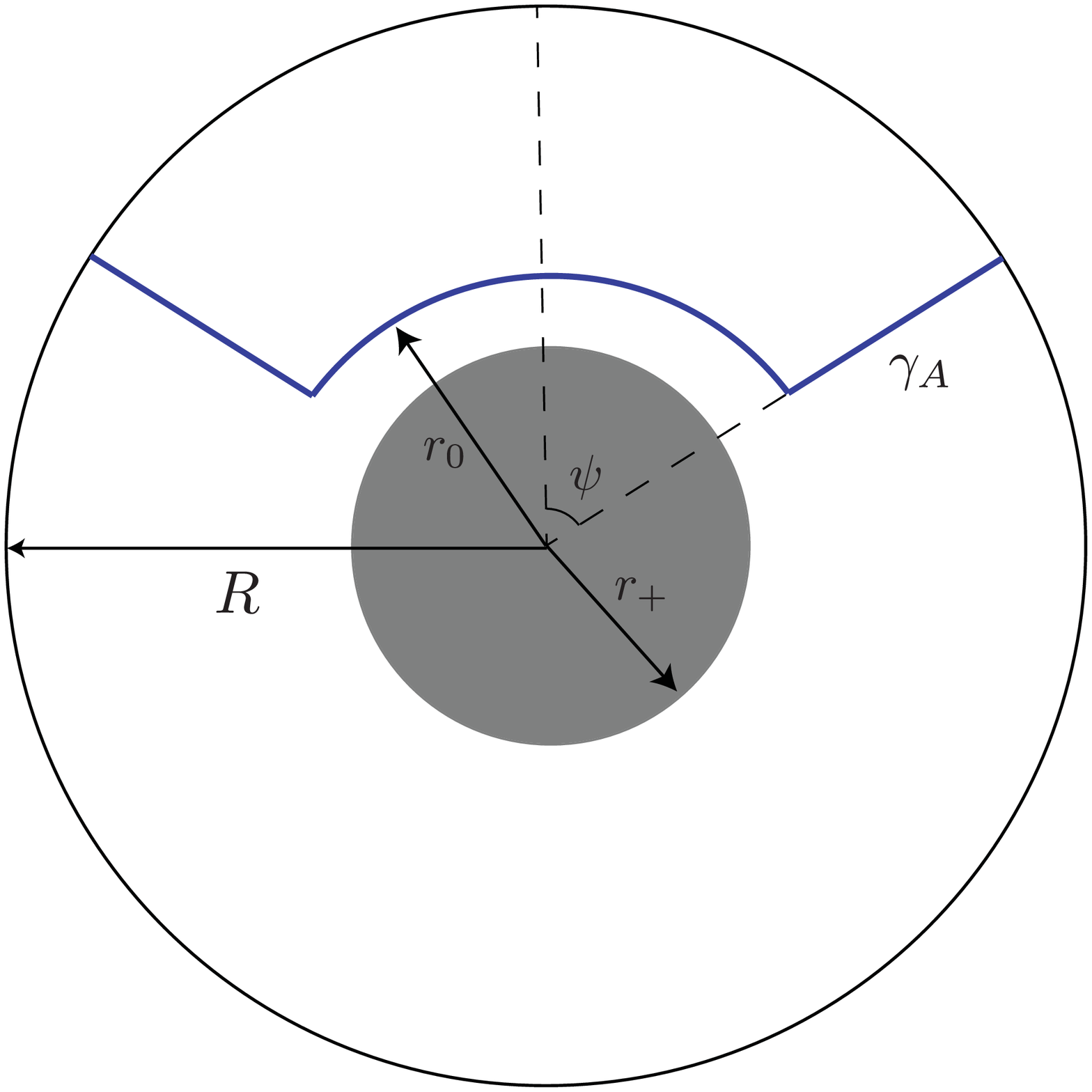}
\end{center}
\caption{The HRRT surface $\gamma_A$ in the Schwarzschild-AdS 
 spacetime can be well approximated by consisting of two components:\ 
 a ``cylindrical'' piece with $\theta = \psi$ and a ``bottom lid'' 
 piece with $r = r_0$.}
\label{fig:S-AdS_HRRT}
\end{figure}
The area of the surface is then given by
\begin{equation}
  \norm{\gamma_A} = \underset{r_0}{\rm min} \left[ A_{d-2}\, 
    \sin^{d-2}\!\psi \int_{r_0}^R\! \frac{r^{d-2}}{\sqrt{\frac{r^2}{l^2} 
      - \frac{r_+^d}{l^2 r^{d-2}}}}\, dr 
    + A_{d-2}\, r_0^{d-1} \int_0^\psi\! \sin^{d-2}\!\theta\, 
      d\theta \right],
\label{eq:norm-g_A-approx}
\end{equation}
where $r_+$ is the horizon radius, and we have used the approximation 
that $r_+ \gg l$ and hence Eq.~(\ref{eq:large-BH}).  The value of $r_0$ 
is determined by the minimization condition
\begin{equation}
  \sqrt{r_0^2 - \frac{r_+^d}{r_0^{d-2}}} 
  = \frac{\sin^{d-2}\!\psi}{(d-1) \int_0^\psi\! \sin^{d-2}\!\theta\, 
    d\theta} l.
\label{eq:r_0}
\end{equation}

As discussed in Section~\ref{subsec:AdS-BH}, the cutoff at $r = R$ in 
our context simply means that the renormalization scale in the boundary 
theory is lowered; in particular, it does not mean that the theory is 
modified by actually terminating space there.  The length in the boundary 
theory, therefore, is still measured in terms of the $d$-dimensional 
metric at infinity, $r = \infty$, with the conformal factor stripped 
off.  The radius of the region $A$ is then given by
\begin{equation}
  L = l\, \psi,
\label{eq:L-psi}
\end{equation}
and not $R \psi$.  Since the cutoff length is $1/\Lambda \approx l^2/R$, 
we should only consider the region $\psi \gtrsim l/R$.

The solution of Eq.~(\ref{eq:r_0}) behaves as
\begin{alignat}{4}
  \mbox{(i)} \quad &
  r_0 = \frac{l}{\psi}\,\, (\gg r_+) \quad &&
  \mbox{for } \frac{l}{R} < \psi \ll \frac{l}{r_+},
\label{eq:sol-1}\\
  \mbox{(ii)} \quad &
  r_0 - r_+ = \frac{l^2}{d \psi^2 r_+}\,\, (\ll \frac{r_+}{d}) \quad &&
  \mbox{for } \frac{l}{r_+} \ll \psi \ll 1,
\label{eq:sol-2}\\
  \mbox{(iii)} \quad &
  r_0 - r_+ = O(1)\, \frac{l^2}{r_+} &&
  \mbox{for } \psi \approx O(1).
\label{eq:sol-3}
\end{alignat}
In the case of (i), $\norm{\gamma_A}$ is dominated by the first term in 
Eq.~(\ref{eq:norm-g_A-approx}), so that
\begin{equation}
  \norm{\gamma_A} = \frac{A_{d-2}}{d-2} l R^{d-2} \psi^{d-2}.
\label{g_A-case-1}
\end{equation}
Here and below, we assume $d > 2$.  We thus obtain an area law for 
the entropy
\begin{equation}
  S_A = \frac{\norm{\gamma_A}}{4 l_{\rm P}^{d-1}} 
  \approx c A_{d-2} L^{d-2} \Lambda^{d-2},
\label{eq:S_A-case-1}
\end{equation}
where $c \approx (l/l_{\rm P})^{d-1}$ is the central charge of the 
boundary CFT.

In the case of (ii), $\norm{\gamma_A}$ is given by
\begin{equation}
  \norm{\gamma_A} = \frac{A_{d-2}}{d-2} l R^{d-2} \psi^{d-2} 
  + \frac{A_{d-2}}{d-1} r_+^{d-1} \psi^{d-1}.
\label{g_A-case-2}
\end{equation}
We find that the first (second) term is larger for
\begin{equation}
  \psi < (>)\, \frac{d-1}{d-2} \frac{l R^{d-2}}{r_+^{d-1}},
\label{eq:psi-trans}
\end{equation}
so that the entanglement entropy behaves as
\begin{equation}
  S_A \approx 
  \left\{ \begin{array}{ll}
    c A_{d-2} L^{d-2} \Lambda^{d-2} &
    \mbox{for } L \ll L_*, \\
    c A_{d-2} \frac{r_+^{d-1} L^{d-1}}{l^{2d-2}} \approx 
      c \left(\frac{T}{\Lambda}\right)^{d-1} A_{d-2}L^{d-1}\Lambda^{d-1} &
    \mbox{for } L \gg L_*,
  \end{array} \right.
\label{eq:S_A-case-2}
\end{equation}
where
\begin{equation}
  L_* \approx \frac{l^2 R^{d-2}}{r_+^{d-1}} 
  \approx \frac{\Lambda^{d-2}}{T^{d-1}}.
\label{eq:L_tr-app}
\end{equation}

For $\psi \approx O(1)$, i.e.\ case~(iii), we find
\begin{equation}
  S_A \approx c \left(\frac{T}{\Lambda}\right)^{d-1} 
    A_{d-2} L^{d-1} \Lambda^{d-1}.
\label{eq:S_A-case-3}
\end{equation}
Combining the results in all three cases gives the expression in 
Eqs.~(\ref{eq:S_A-case},~\ref{eq:L_tr}).

\section{Calculations for the de~Sitter Limit of FRW Universes}
\label{app:dS-FRW}

This appendix collects explicit calculations for entropies and HRRT 
surfaces in the de~Sitter limit of FRW spacetimes.

\subsection{Entropies in the case of {\boldmath $(2+1)$}-dimensional bulk}
\label{subapp:2Dproof}

Here we see that for $(2+1)$-dimensional FRW spacetimes, the results 
of Ref.~\cite{Nomura:2016ikr} immediately tell us that the entanglement 
entropy of an arbitrary (not necessarily connected) subregion $A$ is 
maximal in the de~Sitter limit:
\begin{equation}
  S_{A,w \rightarrow -1} = \frac{1}{4l_{\rm P}} 
    {\rm min}\{ \norm{A}, \norm{\bar{A}} \}.
\label{eq:2+1_sat}
\end{equation}

Consider an FRW universe in $d+1$ dimensions dominated by a single 
ideal fluid component with the equation of state parameter $w = p/\rho$ 
($|w| \leq 1$).  From the analysis of Ref.~\cite{Nomura:2016ikr}, we 
know that the holographic entanglement entropy of a spherical cap region 
$A$ on a leaf---parameterized by the half opening angle $\psi$ as viewed 
from the center of the bulk---scales with the smaller of the volumes 
of $A$ and $\bar{A}$.  The proportionality constant
\begin{equation}
  Q_w(\psi) \equiv \frac{S(\psi)}{\frac{1}{4 l_{\rm P}^{d-1}} 
    {\rm min}\{ \norm{A}, \norm{\bar{A}} \}},
\label{eq:def-Q}
\end{equation}
satisfies the properties
\begin{equation}
  Q_w(\psi \rightarrow 0) \rightarrow 1,
\qquad
  Q_{w \rightarrow -1}(\psi) \rightarrow 1,
\label{eq:prop-Q-1}
\end{equation}
\begin{equation}
  \frac{\partial Q_w(\psi)}{\partial \psi} 
    \biggr|_{\psi = 0} = 0,
\qquad
  \frac{\partial Q_w(\psi)}{\partial \psi} 
    \biggr|_{\psi < \frac{\pi}{2}} \leq 0,
\qquad
  \frac{\partial Q_w(\psi)}{\partial w} < 0.
\label{eq:prop-Q-2}
\end{equation}
(The original analysis was performed for $(3+1)$-dimensional FRW universes, 
but these properties persist in arbitrary spacetime dimensions.)

The second relation in Eq.~(\ref{eq:prop-Q-1}) implies that in the 
de~Sitter limit, $w \rightarrow -1$, the holographic entanglement entropy 
of a spherical cap region is maximal.  Now, consider $(2+1)$-dimensional 
FRW universes, in which a leaf has only one spatial dimension.  We 
consider a subregion on the leaf consisting of the union of two small 
intervals $A$ and $B$.  Note that a similar setup is often discussed 
in AdS/CFT, where two possible extremal surfaces homologous to the 
subregion compete, so that a phase transition from the disconnected 
to connected HRRT surfaces occurs as the regions $A$ and $B$ are taken 
to be closer; see Fig.~\ref{fig:phase-tr}.  We want to understand what 
happens in the case of FRW spacetimes.
\begin{figure}[t]
\begin{center}
  \setcounter{subfigure}{0}
  \subfigure[]
     {\includegraphics[width=6cm]{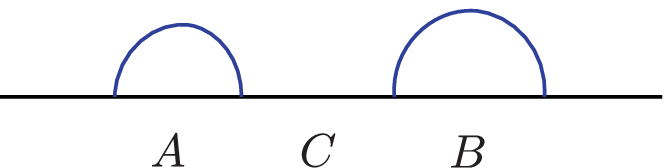}}
\hspace{1.5cm}
  \setcounter{subfigure}{1}
  \subfigure[]
     {\includegraphics[width=6cm]{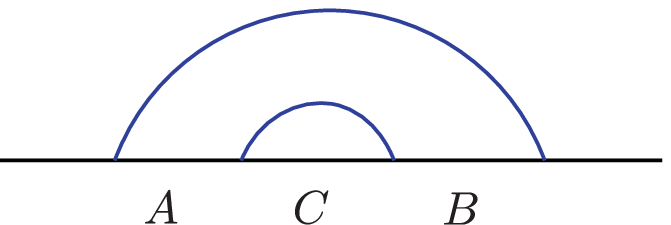}}
\end{center}
\caption{Two possible extremal surfaces anchored to the boundary of 
 a subregion $AB$ on a leaf, given by the union of two disjoint intervals 
 $A$ and $B$.  The areas of the surfaces depicted in (a) and (b) are 
 denoted by $E_{\rm disconnected}(AB)$ and $E_{\rm connected}(AB)$, 
 respectively.}
\label{fig:phase-tr}
\end{figure}

We denote the areas of two possible extremal surfaces by
\begin{align}
  E_{\rm disconnected}(AB) &= E(A) + E(B)
\nonumber\\
  &= Q_w(A)\, \norm{A} + Q_w(B)\, \norm{B},
\label{eq:E_disconn}
\end{align}
and
\begin{align}
  E_{\rm connected}(AB) &= E(ABC) + E(C)
\nonumber\\
  &= Q_w(ABC)\, \norm{ABC} + Q_w(C)\, \norm{C},
\label{eq:E_conn}
\end{align}
where $A$, $B$, and $C$ are defined in Fig.~\ref{fig:phase-tr}.  A phase 
transition can occur when
\begin{equation}
  E_{\rm disconnected}(AB) = E_{\rm connected}(AB).
\label{eq:phase-tr}
\end{equation}

The condition of Eq.~(\ref{eq:phase-tr}) can be satisfied for any 
$w$ away from the de~Sitter limit because of the second relation in 
Eq.~(\ref{eq:prop-Q-2}).  Since a larger region has a greater volume 
but also has a smaller coefficient, it is possible for the two extremal 
surfaces to compete.  However, in the de~Sitter limit the requirement 
for a phase transition becomes
\begin{equation}
  \norm{ABC} + \norm{C} = \norm{A} + \norm{B},
\label{eq:dS-compete}
\end{equation}
which is clearly impossible because the left hand side is always 
greater.  Since a general subregion of the leaf is a union of 
disconnected intervals, the above argument implies that the entanglement 
entropy is merely the sum of each interval's volume for sufficiently 
small regions.  Extending the argument to large regions in which their 
complements matter, we can conclude that arbitrary subregions have 
maximal entanglement entropies in a $(2+1)$-dimensional de~Sitter 
universe.

\subsection{Entropies in the {\boldmath $w \rightarrow -1$} limit of 
 FRW spacetimes}
\label{subapp:FRWlimit}

The global spacetime structure in the case of a single fluid component 
with $w \neq -1$ is qualitatively different from the case discussed mainly 
in Section~\ref{subsec:dS_max-ent}, i.e.\ the case in which a universe 
approaches de~Sitter space at late times.  Nevertheless, here we show 
that the holographic entanglement entropy of an arbitrary subregion on 
a leaf becomes maximal in the $w \rightarrow -1$ limit.

Let us consider an FRW universe filled with a single fluid component 
with the equation of state $w$.  The scale factor is then given by
\begin{equation}
  a(t) = c\, t^{\frac{2}{d(1+w)}},
\label{eq:a-t}
\end{equation}
where $c > 0$ is a constant.  We focus on a leaf $\sigma_*$ at time 
$t_*$ and the causal region $D_{\sigma_*}$ associated with it.  Following 
Ref.~\cite{Nomura:2016ikr}, we perform $t_*$-dependent coordinate 
transformation on the FRW time and radial coordinates $t$ and $r$:
\begin{align}
  \eta &= \frac{2}{d-2+dw} \left\{ 
    \left( \frac{t}{t_*} \right)^{\frac{d-2+dw}{d(1+w)}} - 1 \right\},
\label{eq:eta}\\
  \rho &= \frac{2}{d(1+w)} c\, t_*^{-\frac{d-2+dw}{d(1+w)}} r.
\label{eq:rho}
\end{align}
This converts the metric into the form
\begin{equation}
  ds^2 = \biggl( \frac{{\cal A}_*}{A_{d-1}} \biggr)^{\frac{2}{d-1}} 
    \biggl( \frac{d-2+dw}{2}\eta + 1 \biggr)^{\frac{4}{d-2+dw}} 
    \bigl( -d\eta^2 + d\rho^2 + \rho^2 d\Omega_{d-1}^2 \bigr),
\label{eq:met-transf}
\end{equation}
where $A_{d-1}$ is the area of the $(d-1)$-dimensional unit sphere, 
defined below Eq.~(\ref{eq:volume}), and ${\cal A}_*$ is the volume 
of the leaf $\sigma_*$
\begin{equation}
  {\cal A}_* = \biggl( \frac{d(1+w)}{2} \biggr)^{d-1} 
    \! A_{d-1} \, t_*^{d-1}.
\label{eq:A_*}
\end{equation}
In these coordinates, $D_{\sigma_*}$ is mapped into the region $\eta \in 
[-1,1]$ and $\rho \in [0,1-|\eta|]$.%
\footnote{For $w \geq -1 + 4/d$, the region $D_{\sigma_*}$ hits the 
 big bang singularity, so we need to restrict our attention to a 
 portion of $D_{\sigma_*}$, e.g.\ $D^+_{\sigma_*} = \{p \in D_{\sigma_*} 
 \:|\: t(p) \geq t_* \}$.  This issue is not relevant to our discussion 
 here.}

We can now take $w = -1+ \epsilon$ in Eq.~(\ref{eq:met-transf}) and 
expand it around $\epsilon = 0$.  This gives
\begin{equation}
  ds^2 = \biggl( \frac{{\cal A}_*}{A_{d-1}} \biggr)^{\frac{2}{d-1}} 
    \biggl( \frac{1}{(1-\eta)^2} 
      - d \frac{\eta+(1-\eta)\ln(1-\eta)}{(1-\eta)^3} \epsilon 
      + \cdots \biggr) 
    \bigl( -d\eta^2 + d\rho^2 + \rho^2 d\Omega_{d-1}^2 \bigr).
\label{eq:met-expand}
\end{equation}
The leading order term describes the causal region inside a leaf of 
volume ${\cal A}_*$ in de~Sitter space with conformal coordinates.  The 
time translational Killing symmetry in these coordinates is
\begin{align}
  \eta &\rightarrow a\eta + 1 - a, \\
  \rho &\rightarrow a \rho.
\label{eq:dS-iso}
\end{align}
The expansion in Eq.~(\ref{eq:met-expand}) is not valid when $\eta 
\lesssim 1-\epsilon$.  However, this occurs only for a small subset 
of all the subregions on $\sigma_*$, which becomes measure zero when 
$\epsilon \rightarrow 0$.  Continuity then tells us that the entanglement 
entropy $S_A$ of any subregion $A$ on $\sigma_*$ takes the same value 
as that calculated in de~Sitter space in the $\epsilon \rightarrow 0$ 
limit.  However, we have already concluded from the argument in 
Section~\ref{subsec:dS_max-ent} that the entanglement entropies take 
the maximal form in de~Sitter space, hence
\begin{equation}
  S_A \underset{w \rightarrow -1}{\longrightarrow} 
    \frac{1}{4 l_{\rm P}^{d-1}} {\rm min}\{ \norm{A}, \norm{\bar{A}} \}.
\label{eq:S_A-norm_A-limit}
\end{equation}
Note that the area of the leaf, ${\cal A}_*$, keeps growing indefinitely, 
so that $D_{\sigma_*}$ at each time $t_*$ is mapped to a different 
auxiliary de~Sitter space.  The ratio $Q_w(A) = S_A/({\rm min}\{ \norm{A}, 
\norm{\bar{A}} \} / 4 l_{\rm P}^{d-1})$, however, depends only on $w$ 
and not $t_*$.

\subsection{HRRT surfaces}
\label{subapp:HRRT-dS}

Here we present two examples in which one can analytically see the 
convergence of the HRRT surfaces onto the future boundary of the causal 
region of a leaf in the de~Sitter limit.

\subsubsection*{The de~Sitter limit of FRW universes in {\boldmath $2+1$} 
 dimensions}

As the first example, consider the de~Sitter limit of FRW universes 
in $2+1$ dimensions
\begin{equation}
  ds^2 = a^2(\eta)\, (-d\eta^2 + dx^2 + dy^2).
\label{eq:dS_2+1}
\end{equation}
Here, $\eta \in (-\infty,0)$ is the conformal time, and the scale factor 
is given by
\begin{equation}
  a(\eta) = \frac{c}{\eta},
\end{equation}
where $c$ is a positive constant.  In this case, we can obtain an analytic 
solution for HRRT surfaces, which are geodesics in $2+1$ dimensions.

In order to find a spacelike geodesic anchored to two points on the 
leaf, we can use the symmetry of the problem to rotate our axes so that 
the points lie at constant $y = y_0$.  To find a geodesic, we need to 
extremize the distance functional
\begin{equation}
  {\cal D} = \int\!d\eta\, \frac{c}{\eta} \sqrt{\dot{x}^2-1},
\label{eq:L-distance}
\end{equation}
where $\dot{x} = dx/d\eta$, and we have used the fact that the geodesic 
lies on the $y=y_0$ hypersurface.  This functional has no explicit 
dependence on $x$, which means the existence of a quantity that is 
conserved along the geodesic
\begin{equation}
  \frac{\partial{\cal D}}{\partial\dot{x}} 
  = \frac{c\dot{x}}{\eta\sqrt{\dot{x}^2-1}} 
  \equiv p_x.
\label{eq:p_x}
\end{equation}
Using this, we obtain a first-order ordinary differential equation
\begin{equation}
  \frac{d\eta}{dx} = \sqrt{1-\frac{c^2}{p_x^2 \eta^2}},
\label{eq:ODE}
\end{equation}
which can be easily solved to give the analytic expression for the geodesic
\begin{equation}
  \left\{ \begin{array}{ll}
    \eta(x) &= -\sqrt{x^2 + \frac{c^2}{p_x^2}}, \\
    y(x) &= y_0.
  \end{array} \right.
\label{eq:HRRT-dS2+1-1}
\end{equation}

The holographic screen of FRW universes in the de~Sitter limit lies on
\begin{equation}
  \eta = -\sqrt{x^2+y^2} \equiv -r.
\label{eq:dS_2+1_screen}
\end{equation}
Consider a leaf at $\eta = \eta_* = - r_*$ and a subregion on it specified 
by a half opening angle $\psi$ ($0 \leq \psi \leq \pi$).  The end points 
of the HRRT surface are then at
\begin{equation}
  (x, y) = (\mp \eta_* \sin\psi, -\eta_* \cos\psi).
\label{eq:end-points}
\end{equation}
This can be used to determine $p_x$ and $y_0$ in 
Eq.~(\ref{eq:HRRT-dS2+1-1}), giving the final expression 
for the geodesic
\begin{equation}
  \left\{ \begin{array}{ll}
    \eta(x) &= -\sqrt{x^2 + y_0^2}, \\
    y(x) &= y_0,
  \end{array} \right.
\label{eq:HRRT-dS2+1-2}
\end{equation}
where $y_0 = -\eta_* \cos\psi$.  By varying the angle $\psi$, the HRRT 
surfaces sweep a codimension-1 surface in the bulk, which is indeed the 
future boundary of the causal region of the leaf:
\begin{equation}
  \eta = -r,
\qquad
  0 \leq r \leq r_*\, (=\! -\eta_*).
\end{equation}
These surfaces are depicted in $x$-$y$-$\eta$ space in 
Fig.~\ref{fig:dS_2+1}.
\begin{figure}[t]
\begin{center}
  \includegraphics[height=6.5cm]{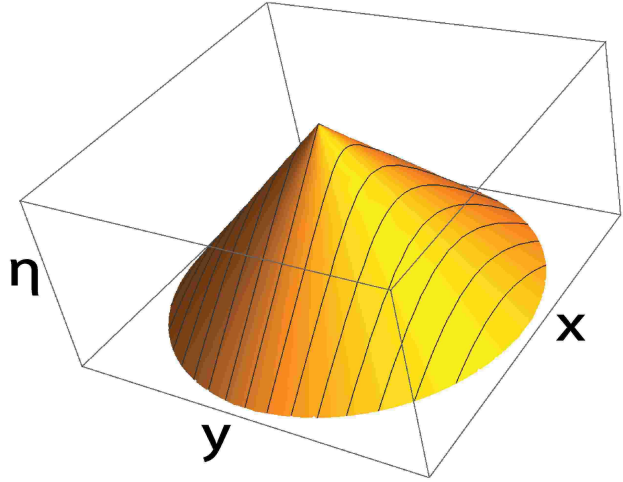}
\end{center}
\caption{HRRT surfaces anchored to subregions on a leaf in 
 $(2+1)$-dimensional de~Sitter space.  They all lie on the future 
 boundary of the causal region associated with the leaf.}
\label{fig:dS_2+1}
\end{figure}
We can clearly see that all the HRRT surfaces are spacelike, except for 
that corresponding to $\psi = \pi/2$ which is null.

\subsubsection*{Small spherical caps in FRW universes in {\boldmath $d+1$} 
 dimensions}

Another example in which simple analytic expressions are obtained 
is the limit of small spherical cap regions, $\psi \ll 1$, on a leaf. 
Consider a flat FRW universe in $d+1$ dimensions
\begin{equation}
  ds^2 = a(\eta)^2 \bigl( -d\eta^2 + dr^2 + r^2 d\Omega_{d-1}^2 \bigr),
\label{eq:FRW_d+1}
\end{equation}
filled with a single fluid component with the equation of state $w$. 
We consider the leaf $\sigma_*$ at $\eta = \eta_*$, which is located at
\begin{equation}
  r = \frac{a}{\dot{a}}.
\label{eq:r_leaf}
\end{equation}
The future boundary $F_*$ of the causal region $D_{\sigma_*}$ is then 
given by
\begin{equation}
  F_*: \eta(r) = \eta_* + \frac{a}{\dot{a}} - r.
\label{eq:F_*}
\end{equation}
Here and below, the scale factor and its derivatives without an argument 
represent those at $\eta = \eta_*$:
\begin{equation}
  a \equiv a(\eta_*),
\qquad
  \dot{a} \equiv \frac{da(\eta)}{d\eta}\biggr|_{\eta = \eta_*},
\qquad
  \ddot{a} \equiv \frac{d^2a(\eta)}{d\eta^2}\biggr|_{\eta = \eta_*}.
\label{eq:a-da-dda}
\end{equation}

We consider a spherical cap region $A$ on the leaf $\sigma_*$, specified 
by a half opening angle $\psi$
\begin{equation}
  0 \leq \theta \leq \psi,
\label{eq:psi}
\end{equation}
where $\theta$ is a polar angle parameterizing $S^{d-1}$ with constant 
$\eta$ and $r$.  Following Ref.~\cite{Nomura:2016ikr}, we go to cylindrical 
coordinates:
\begin{equation}
  \xi = r \sin\theta,
\qquad
  z = r \cos\theta - \frac{a}{\dot{a}} \cos\psi.
\label{eq:cylind}
\end{equation}
In these coordinates, the null cone $F_*$ in Eq.~(\ref{eq:F_*}) is 
given by
\begin{equation}
  F_*: \eta(\xi) = \eta_* + \frac{a}{\dot{a}} 
    - \sqrt{\xi^2 + \Bigl( z + \frac{a}{\dot{a}} \cos\psi \Bigr)^2},
\label{eq:F_*-cylind}
\end{equation}
and the boundary of $A$, $\partial A$, is located at
\begin{equation}
  \eta = \eta_*,
\qquad
  \xi = \frac{a}{\dot{a}} \sin\psi \equiv \xi_*,
\qquad
  z = 0.
\label{eq:boundary-A}
\end{equation}
The HRRT surface $\gamma_A$ anchored to $\partial A$ is on the $z = 0$ 
hypersurface~\cite{Nomura:2016ikr}.  We would like to compare this HRRT 
surface with the intersection of $F_*$ and $z = 0$:
\begin{equation}
  l_A: \eta(\xi) = \eta_* + \frac{a}{\dot{a}} 
    - \sqrt{\xi^2 + \frac{a}{\dot{a}} \cos\psi},
\label{eq:l_A}
\end{equation}
see Fig.~\ref{fig:small-psi}.
\begin{figure}[t]
\begin{center}
  \includegraphics[height=6.5cm]{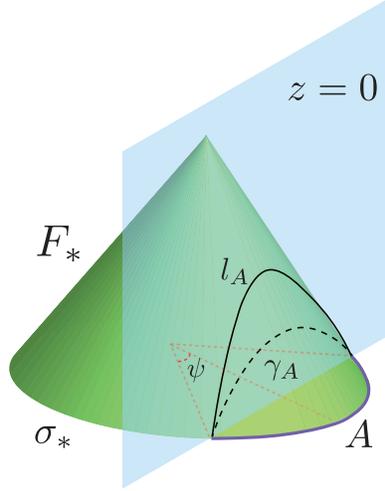}
\end{center}
\caption{The HRRT surface $\gamma_A$ for subregion $A$ of a leaf 
 $\sigma_*$ specified by a half opening angle $\psi$ is on the $z=0$ 
 hypersurface.  It approaches the surface $l_A$, the intersection 
 of the null cone $F_*$ and the $z=0$ hypersurface, in the de~Sitter 
 limit.}
\label{fig:small-psi}
\end{figure}
Using Eq.~(\ref{eq:boundary-A}) and expanding in powers of $\psi \sim 
\xi/(a/\dot{a})$, this can be written as
\begin{equation}
  l_A: \eta(\xi) = \eta_* + \frac{\dot{a}}{2a} (\xi_*^2-\xi^2) 
    + \frac{\dot{a}^3}{8a^3} (\xi_*^2-\xi^2)^2 + \cdots.
\label{eq:l_A-exp}
\end{equation}

For $\psi \ll 1$, the HRRT surface can be expressed in a power series form
\begin{equation}
  \gamma_A: \eta(\xi) 
  = \eta_* + \eta^{(2)}(\xi) + \eta^{(4)}(\xi) + \cdots,
\label{eq:HRRT-exp}
\end{equation}
where
\begin{align}
  \eta^{(2)}(\xi) =& \frac{\dot{a}}{2a} (\xi_*^2 - \xi^2),
\label{eq:exp-2}\\
  \eta^{(4)}(\xi) =& -\frac{\dot{a}}{8a^3(d+1)} (\xi_*^2 - \xi^2) 
\nonumber\\
  & {}\quad 
    \times \Bigl[ \dot{a}^2 \bigl\{ (d+5) \xi_*^2 - (d-3) \xi^2 \bigr\} 
    - a \ddot{a} \bigl\{ (d+3) \xi_*^2 - (d-1) \xi^2 \bigr\} \Bigr].
\label{eq:exp-4}
\end{align}
In the universe dominated by a single fluid component, the scale factor 
behaves as
\begin{equation}
  a(\eta) \propto \eta^{\frac{2}{d-2+dw}}.
\label{eq:a-eta_w}
\end{equation}
Plugging this into Eq.~(\ref{eq:exp-4}), we obtain
\begin{equation}
  \eta^{(4)}(\xi) = \frac{\dot{a}^3}{16(d+1) a^3} (\xi_*^2 - \xi^2) 
    \Bigl[ \bigl\{ 2-(1+3w)d-(1+w)d^2 \bigr\} \xi_*^2 
    - \bigl\{ 2+(3+w)d-(1+w)d^2 \bigr\} \xi^2 \Bigr].
\label{eq:exp-4_w}
\end{equation}
We find that for $w = -1$, the surface given by 
Eqs.~(\ref{eq:HRRT-exp},~\ref{eq:exp-2},~\ref{eq:exp-4_w}) agree with 
$l_A$ in Eq.~(\ref{eq:l_A-exp}).  Namely, the HRRT surface $\gamma_A$ 
is on the null cone $F_*$.

One can see how $\gamma_A$ approaches $F_*$ as $w \rightarrow -1$ by 
subtracting Eq.~(\ref{eq:HRRT-exp}) from Eq.~(\ref{eq:l_A-exp}):
\begin{align}
  \eta^{l_A}(\xi) - \eta^{\gamma_A}(\xi) 
  &= \frac{\dot{a}^3}{16 a^3} \frac{d}{d+1} (1+w) (\xi_*^2 - \xi^2) 
    \bigl\{ (d+3) \xi_*^2 - (d-1) \xi^2 \bigr\} 
\nonumber\\
  &\geq 0.
\label{eq:Delta-eta}
\end{align}
The inequality is saturated only for $w = -1$ (except at the end points 
at $\xi = \xi_*$).

\end{document}